\documentclass[lettersize,journal]{IEEEtran}

\usepackage{array}
\usepackage[caption=false,font=small,labelfont=sf,textfont=sf]{subfig}
\usepackage{stfloats}
\usepackage{verbatim}
\usepackage{setspace}

\usepackage{cite}
\usepackage{amsmath,amssymb,amsfonts}
\usepackage{algorithmic}
\usepackage{graphicx}
\usepackage{textcomp}
\usepackage{xcolor}
\usepackage{graphicx}
\usepackage{float}
\usepackage{overpic}
\usepackage{comment}
\usepackage{url}
\usepackage{booktabs}

\usepackage{subeqnarray}
\usepackage{cases}
\usepackage{algorithm}

\def\BibTeX{{\rm B\kern-.05em{\sc i\kern-.025em b}\kern-.08em
		T\kern-.1667em\lower.7ex\hbox{E}\kern-.125emX}}
\begin{document}
	\newtheorem{definition}{\it Definition}
	\newtheorem{theorem}{\bf Theorem}
	\newtheorem{lemma}{\it Lemma}
	\newtheorem{corollary}{\it Corollary}
	\newtheorem{remark}{\it Remark}
	\newtheorem{example}{\it Example}
	\newtheorem{case}{\bf Case Study}
	\newtheorem{assumption}{\it Assumption}
	\newtheorem{property}{\it Property}
	\newtheorem{proposition}{\it Proposition}
	\newtheorem{observation}{\it Observation}
	
	\newcommand{\hP}[1]{{\boldsymbol h}_{{#1}{\bullet}}}
	\newcommand{\hS}[1]{{\boldsymbol h}_{{\bullet}{#1}}}
	
	\newcommand{\ba}{\boldsymbol{a}}
	\newcommand{\baq}{\overline{q}}
	\newcommand{\bA}{\boldsymbol{A}}
	\newcommand{\bb}{\boldsymbol{b}}
	\newcommand{\bB}{\boldsymbol{B}}
	\newcommand{\bc}{\boldsymbol{c}}
	\newcommand{\bcO}{\boldsymbol{\cal O}}
	\newcommand{\bh}{\boldsymbol{h}}
	\newcommand{\bH}{\boldsymbol{H}}
	\newcommand{\bl}{\boldsymbol{l}}
	\newcommand{\bm}{\boldsymbol{m}}
	\newcommand{\bn}{\boldsymbol{n}}
	\newcommand{\bo}{\boldsymbol{o}}
	\newcommand{\bO}{\boldsymbol{O}}
	\newcommand{\bp}{\boldsymbol{p}}
	\newcommand{\bq}{\boldsymbol{q}}
	\newcommand{\bR}{\boldsymbol{R}}
	\newcommand{\bs}{\boldsymbol{s}}
	\newcommand{\bS}{\boldsymbol{S}}
	\newcommand{\bT}{\boldsymbol{T}}
	\newcommand{\bw}{\boldsymbol{w}}
	
	\newcommand{\balpha}{\boldsymbol{\alpha}}
	\newcommand{\bbeta}{\boldsymbol{\beta}}
	\newcommand{\bOmega}{\boldsymbol{\Omega}}
	\newcommand{\bTheta}{\boldsymbol{\Theta}}
	\newcommand{\bphi}{\boldsymbol{\phi}}
	\newcommand{\btheta}{\boldsymbol{\theta}}
	\newcommand{\bvarpi}{\boldsymbol{\varpi}}
	\newcommand{\bpi}{\boldsymbol{\pi}}
	\newcommand{\bpsi}{\boldsymbol{\psi}}
	\newcommand{\bxi}{\boldsymbol{\xi}}
	\newcommand{\bx}{\boldsymbol{x}}
	\newcommand{\by}{\boldsymbol{y}}
	
	\newcommand{\cA}{{\cal A}}
	\newcommand{\bcA}{\boldsymbol{\cal A}}
	\newcommand{\cB}{{\cal B}}
	\newcommand{\cE}{{\cal E}}
	\newcommand{\cG}{{\cal G}}
	\newcommand{\cH}{{\cal H}}
	\newcommand{\bcH}{\boldsymbol {\cal H}}
	\newcommand{\cK}{{\cal K}}
	\newcommand{\cO}{{\cal O}}
	\newcommand{\cR}{{\cal R}}
	\newcommand{\cS}{{\cal S}}
	\newcommand{\dcS}{\ddot{{\cal S}}}
	\newcommand{\ds}{\ddot{{s}}}
	\newcommand{\cT}{{\cal T}}
	\newcommand{\cU}{{\cal U}}
	\newcommand{\wt}[1]{\widetilde{#1}}
	\newcommand{\cN}{{\cal N}}

	\newcommand{\mA}{\mathbb{A}}
	\newcommand{\mE}{\mathbb{E}}
	\newcommand{\mG}{\mathbb{G}}
	\newcommand{\mR}{\mathbb{R}}
	\newcommand{\mS}{\mathbb{S}}
	\newcommand{\mU}{\mathbb{U}}
	\newcommand{\mV}{\mathbb{V}}
	\newcommand{\mW}{\mathbb{W}}
	
	\newcommand{\uq}{\underline{q}}
	\newcommand{\ubq}{\underline{\boldsymbol q}}
	
	\newcommand{\red}[1]{\textcolor[rgb]{1,0,0}{#1}}
	\newcommand{\gre}[1]{\textcolor[rgb]{0,1,0}{#1}}
	\newcommand{\blu}[1]{\textcolor[rgb]{0,0,1}{#1}}

	\newcommand{\cd}{\theta_d}
	\newcommand{\cg}{\theta_g}
    \newcommand{\cc}{\theta_c}
    \newcommand{\cu}{\theta_u}
	\newcommand{\cp}{\theta_p}
	\newcommand{\clb}{\Theta_\alpha^L}
	\newcommand{\cub}{\Theta_\alpha^U}
	\newcommand{\ct}{\Phi_\alpha}
	\newcommand{\vp}{\theta_\alpha}
	\newcommand{\ra}{R_\alpha}
	\newcommand{\dm}{d_{\min}}
	
	\title{On the Rate-Distortion-Complexity Tradeoff for Semantic Communication}
	
	\author{Jingxuan~Chai, Yong~Xiao, \IEEEmembership{Senior~Member, IEEE}, Guangming~Shi, \IEEEmembership{Fellow, IEEE}

    \thanks{*This work is accepted at IEEE Internet of Things Journal. Copyright may be transferred without notice, after which this version may no longer be accessible. 
    
    The work of Y. Xiao was supported in part by the National Natural Science Foundation of China (NSFC) under grant 62525109 and the Mobile Information Network National Science and Technology Key Project under grant 2024ZD1300700. The work of G. Shi was supported in part by the NSFC under grants 62293483, 62506281, 62476206, and the National Key R\&D Program of China under grant 2025YFF0514800. 
    
			J. Chai is with the School of Artificial Intelligence at the Xidian University, Shaanxi 710071, China
			(e-mail: chaijingxuan@stu.xidian.edu.cn)
			
			Y. Xiao is with the School of Electronic Information and Communications at the Huazhong University of Science and Technology, Wuhan 430074, China, also with the Pengcheng Laboratory, Shenzhen, Guangdong 518055, China, and also with the Pazhou Laboratory (Huangpu), Guangzhou, Guangdong 510555, China 
            (e-mail: yongxiao@hust.edu.cn).
			
			G. Shi is with the Peng Cheng Laboratory, Shenzhen, Guangdong 518055, China, and with the School of Artificial Intelligence, the Xidian University, Xi’an, Shaanxi 710071, China
            (e-mail: gmshi@xidian.edu.cn).
			

            Copyright (c) 2026 IEEE. Personal use of this material is permitted. However, permission to use this material for any other purposes must be obtained from the IEEE by sending a request to pubs-permissions@ieee.org.
	}
	}

	\maketitle
    
	\begin{abstract}
        Semantic communication has emerged as a novel communication paradigm that focuses on conveying the user's intended meaning rather than the bit-wise transmission of source signals. One of the key challenges is to effectively represent and extract the semantic meaning of any given source signals. While deep learning (DL)-based solutions have shown promising results in extracting implicit semantic information from a wide range of sources, existing work often overlooks the high computational complexity inherent in both model training and inference for the DL-based encoders and decoders. 
        To bridge this gap, this paper proposes a rate–distortion–complexity (RDC) framework which extends the classical rate-distortion theory by incorporating the constraints on system distortion, including both the traditional bit-wise distortion metric and statistical divergence-based semantic distance, and complexity measure, adopted from the theories of minimum description length (MDL) and information bottleneck (IB). 
        We derive the closed-form theoretical results of the minimum achievable rate under given constraints on semantic distance and complexity for both Gaussian and binary semantic sources. Our theoretical results show a fundamental three-way tradeoff among achievable rate, distortion, and model complexity. 
        Extensive experiments on real-world image datasets validate this tradeoff and further demonstrate that our information-theoretic complexity measure effectively correlates with practical computational costs, guiding efficient system design in resource-constrained scenarios.
	\end{abstract}
    \begin{IEEEkeywords}
    Semantic communication, minimum description length, information bottleneck, rate-distortion-complexity.
    \end{IEEEkeywords}

	\section{Introduction}

Semantic communication is a novel communication paradigm that focuses on conveying the meaning of a message rather than symbol-level transmission and delivery \cite{weaver1953recent}. It has recently attracted significant interest due to its promising potential to address the critical challenges of network efficiency and the massive data traffic expected in next-generation wireless systems, especially 6G and beyond \cite{shi2021semantic}. 

A primary challenge in semantic communication is efficiently extracting and transmitting task-specific semantic information from a source signal. 
In particular, unlike traditional data-focused communication methods that compress and send every bit of data, semantic communication requires both the transmitter and receiver to share a shared understanding of the specific requirements of each individual task requested by the user \cite{chai2023rate,chai2025on}. 
Recent works mostly adopt DL-based representation learning solutions to convert complex signal sources into compact, low-dimensional representations that preserve the core semantic information. 
Previous results have already shown that the DL-based semantic encoders and decoders can extract and model intricate, non-linear semantic dependencies and context-dependent nuances that are intractable for traditional static, rule-based, and analytically predefined coding solutions \cite{xie2021deep,jiang2022wireless}. 

Despite its promising potential, most existing literature in semantic communication often overlooks a critical drawback: the high computational complexity inherent in both model training and inference of the DL-based encoding and decoding processes \cite{xiao2022rate, liu2022indirect, xiao2022imitation,xiao2025AgentNet}. 
This computational burden presents a major obstacle to the practical deployment of semantic communication systems, particularly in resource-constrained environments. Currently, there is a lack of a unified theoretical framework that can characterize the impact of this complexity on the performance of communication systems. 

This motivates the work in this paper, where we investigate the complexity of DL-based semantic encoding models and their impact on the efficiency and performance of semantic communication, characterized by both the achievable rate of semantic information delivery and the task-specific semantic fidelity of the recovered signals. We develop a unified theoretical framework that can quantify the fundamental tradeoff among rate, semantic fidelity, and model complexity for semantic communication, from the perspectives of information theory and algorithmic complexity. More specifically, to characterize the correlation between the coding complexity and communication efficiency, we quantify the complexity of semantic coding models using the MDL principle, a commonly adopted metric in algorithmic complexity theory that defines model complexity as the minimal description length required to encode both the model itself and its predictions on observed data \cite{gilad2003information,grunwald2007minimum,blier2018description}. 
Compared to other popular metrics, such as the total number of model parameters, MDL offers a comprehensive characterization of coding complexity and the ability to compress and generalize from the source data. It does not depend on the specific model's structure and running platform, and therefore can serve as a more useful metric that quantifies a model's complexity not just by its structure but by its effectiveness as a compression scheme for the data it is trained on. 
To quantify the impact of the model complexity on task-specific semantic communication performance, we combine the theory of MDL with the IB principle \cite{tishby2000information,tishby2015deep}, an information-theoretic principle that characterizes the optimal representation of a given source signal that preserves the maximum amount of semantic information that is relevant to a specific task. 
Moreover, motivated by the recent results suggesting that, compared to the traditional symbol-based distortion measure, the divergence measures of probability distributions are a more suitable metric for evaluating the semantic difference in semantic communication systems\cite{chai2025on}, we consider both the symbol-based distortion and the divergence of probability distributions between the user's original semantic information and the received signals, which we refer to as the semantic distance. 
We derive the closed-form expressions for the achievable rate under the constraints on mean squared error (MSE) distortion, semantic distance, and coding complexity when the source signals follow a Gaussian or binary distribution, which reveals the tradeoff among achievable rate, semantic distance, and model complexity. 
Beyond the theoretical analysis, our experimental study establishes a concrete link between this information-theoretic complexity measure and practical computational costs, demonstrating its direct utility for guiding encoder design in resource-constrained systems. 

The main contributions of this paper are as follows:

    \begin{itemize}
        \item We propose a novel RDC framework that generalizes rate-distortion theory by incorporating a model complexity constraint, explicitly modeling the cost of DL-based coding in semantic communication systems.
        \item We derive the closed-form expressions of the RDC functions for both Gaussian and binary semantic sources,
        which both reveal the tradeoff among rate, distortion, and complexity, namely the RDC tradeoff.
        \item We develop a variational method to approximate the solution to the RDC optimization problem when the data source is unknown and only data samples are available.
        \item Extensive experimental results on classification and generation tasks validate the RDC tradeoff and empirically link the proposed complexity measure to practical computational costs, providing guidance for designing semantic encoders in resource-constrained scenarios.
    \end{itemize}

    \section{Background and Preliminaries}
    \label{sec:background}

    \subsection{Quantifying the Complexity of DL-based Models}

Currently, quantifying the complexity of DL-based models remains a formidable challenge. While several established metrics exist, such as the VC dimension\cite{vapnik1994measuring}, model size, and floating-point operations (FLOPs), each carries specific limitations. For instance, the VC dimension \cite{vapnik1994measuring}, a classical measure derived from statistical learning theory, characterizes a model's expressive power by the size of the largest dataset that can perfectly fit all possible labels. It provides a worst-case bound on generalization error, which is too loose for high-dimensional neural networks. The number of degrees of freedom \cite{gao2016degrees} offers another perspective, measuring the expected gap between training and test error. It quantifies the sensitivity of a model's predictions to perturbations in the training labels, effectively capturing the model's flexibility or adaptability in fitting data. Due to structural and implicit regularization in deep networks, this measure often fails to account for the actual structural richness or the generalization capacity of the learned representations \cite{li2018measuring}. 
    Alternatively, the intrinsic dimension describes the minimum dimensionality at which a model can be effectively trained, typically by optimization within a randomly sampled low-dimensional subspace. This metric reflects the inherent complexity of the learning problem itself rather than the capacity of a specific model architecture.
    From a systems perspective, FLOPs are widely used to measure the computational workload of an algorithm or model. They count the number of floating-point multiplication and addition operations required for a forward pass, providing a hardware-agnostic estimate of runtime and energy consumption \cite{he2016deep,dai2018compressing}. 
    FLOPs serve as a direct and practical proxy for computational complexity in practical deployment scenarios.

    In contrast, information theory, specifically the MDL principle, provides a more comprehensive and robust framework for quantifying the complexity of DL-based models. The MDL-based approach treats model complexity as a communication problem, where the ``best" model is the one that achieves the shortest total compression of both the model parameters and the data given those parameters. Unlike FLOPs or model sizes, MDL bridges the gap between model size and data fitting by measuring complexity in bits. Also, it naturally penalizes overfitting by accounting for the precision of weights: a model with many parameters that require very little precision to describe is viewed as ``simpler" than a smaller model requiring high-precision weights. 

    \subsection{The MDL and IB Principles}

        To quantify model complexity within the RDC framework, in this paper, we adopt the MDL principle to quantify the computational cost of learning beyond simple parameter counting\cite{gilad2003information, grunwald2007minimum, blier2018description}. MDL has recently emerged as a critical metric for evaluating the structural and functional complexity of neural network-based models. Recent advancements have successfully applied MDL frameworks to analyze the structural complexity of Transformer architectures\cite{sefidgaran2023minimum} and to demonstrate that shorter description lengths in large language models correlate with superior generalization\cite{zhao2023large}. Furthermore, variational objectives that approximate MDL-based complexity have been shown to guide models toward asymptotically optimal compression \cite{shaw2025bridging}. 
        Recent information-theoretic frameworks have utilized MDL to account for the finite computational costs of encoding and decoding schemes, a limitation in classical Shannon theory that our RDC framework seeks to address \cite{finzi2026entropy}. By grounding our complexity measure in these recent advancements, we establish a principled link between the mutual information $I(U;X)$ and the effective description length of DNN-based representations. 
        
    Formally, consider a coding model with input $X$ and output $U$, the MDL principle is defined as follows:  
    \begin{definition}
    The MDL principle selects the model \(p(y|x)\) that minimizes the expected code-length, formulated as:
    \begin{equation}
        \min_{p \in \mathcal{P}} \mathbb {E}_q 
        [L(U|X)],
    \end{equation}
    where $q(x,y)$ is the true data distribution,
    $\mathcal P$ is the set of all possible $p(u|x)$,
    and $L(u|x):=-\log_2 p(u|x)$ is the codelength for any $u$ given $x$.
    \end{definition}
    
    Then, for any coding method, the corresponding code-length is lower-bounded by \cite{mackay2003information}:
    \begin{equation}
    \label{eq:pre_mdl}
        \mathbb {E}_q[L(U|X)]\geq H(U|X),
    \end{equation}
    with equality if and only if the coding distribution matches the ground-truth conditional distribution.
    
    By comparing against a classic code scheme \cite{blier2018description} without requiring any model training, which has a code length of $H(U)$, we can then define the gain of the model compression and derive an upper bound as follows: 
    \begin{equation}
        H(U)-\mathbb {E}_q[L(U|X)]\leq H(U)-H(U|X)=I(X;U),
    \end{equation}
    where the compression capability, or model complexity, can be quantified by the mutual information between the model's input and output $I(X;U)$.
    Unlike traditional complexity metrics,
    mutual information directly quantifies the amount of meaningful structure captured from the data. 
    Specifically, in the case of overfitting to random labels where the input and output are independent, the mutual information  
    $I(X;U)$ is zero. Thus, $I(X;U)$ serves as a principled measure of effective model complexity \cite{gilad2003information}.

    Applying mutual information to measure the model complexity of the MDL principle coincides with that of the IB principle.
    Specifically, the IB principle formulates the problem of extracting information about the source $Y$ through a correlated observation $X$ \cite{tishby2000information}.
    Denote by $U=f(X)$ the random variable of extracted information, where $f(\cdot)$ is a function of $X$.
    Formally, given $p_{YX}$, the IB problem is defined as
    \begin{equation}
    \label{eq:IB_problem}
        \min_{p_{XU}}I(X;U),
        \quad \text{s.t.}\;
         I(U;Y)\geq \beta,
    \end{equation}
    Here $I(U;Y)$ represents the  relevance of $U$ to the source $Y$,
    and $I(X;U)$ quantifies the complexity of $U$,  where complexity here is measured by the description length at which the input $X$ is compressed.
    The IB method can be viewed as a rate-distortion problem with the constraint on KL divergence, which is equivalent to the relevance term \cite{tishby2015deep}.

    \subsection{Semantic Rate-Distortion Theory}
    The semantic communication paradigm fundamentally shifts the design objective from the accurate recovery of a source signal to the delivery of intended meaning to accomplish specific tasks \cite{shi2021semantic}. This shift has led to the development of the semantic rate-distortion theory, which reformulates the classical rate-distortion problem from a task-oriented semantic fidelity perspective \cite{chai2023rate, strinati20216g, kountouris2021semantics}.
    To model the intrinsic and unobservable nature of semantic meaning, most of the semantic rate-distortion frameworks assume the encoder can only observe a corrupted version of the semantic source, where the correlation between the semantic source and the observation is quantified by a given conditional distribution\cite{xiao2022imitation,xiao2023reasoning}.
    Formally, let $S$, $U$, $\hat S$ denote the semantic source, indirect observation and reconstruction, the indirect semantic rate-distortion problem is formulated as
    \begin{equation}
        \min_{p_{\hat S|U}} I(U;\hat S),\quad \text{s.t. } \mathbb{E}[d(S,\hat S)]\leq \cd,
    \end{equation}
    where $\cd$ is the maximum tolerable distortion.
    This formulation is analogous to the classical indirect rate-distortion problem \cite{witsenhausen2003indirect}.
    Recent work establishes an indirect rate-distortion framework for such semantic sources \cite{liu2022indirect}. 
    This framework has been extended to incorporate side information and distinct distortion measures at the encoder and decoder, enabling strategic semantic coding \cite{xiao2022rate}.
    Beyond bit-level fidelity, preserving the statistical properties of the semantic source is crucial for many intelligent tasks.
    The rate-distortion problem under the constraint on divergence between the marginal distributions of the source and its recovery is termed as the rate-distortion-perception (RDP) theory \cite{blau2018perception,blau2019rethinking}, often used to learn effective representation for specific tasks such as generation \cite{zhang2021universal} and classification \cite{liu2019classification}.
    Subsequent works have extended the RDP frameworks into indirect source scenarios for task-oriented semantic communications\cite{chai2023rate,chai2025on,wang2025task}. 
    Formally, the indirect semantic rate-distortion problem with a divergence constraint is defined as:
    \begin{align}
        &\min_{p_{\hat S|U}}I(U;\hat S),\\
        \text{s.t. }  \mathbb{E}[d(S,&\hat S)]\leq \cd,\;
        d_{p}(p_S,p_{\hat S})\leq \cp\nonumber
    \end{align}
    where $d_p$ is any divergence metric  and $\cp$ is the maximum tolerable divergence.
    However, these works often overlook the computational cost of the encoding process, limiting the practical deployment of their coding schemes in resource-constrained systems. 

    Recent advancements of semantic rate-distortion theory have incorporated the IB method, demonstrating its potential in task-oriented communications \cite{shao2021learning,barbarossa2023semantic,xie2023robust}. 
    For instance, the authors of \cite{shao2021learning} propose an IB-based coding scheme in which the encoder generates compressed feature vectors that retain the most relevant information about the source.
    Similarly, the authors of \cite{barbarossa2023semantic}
    introduce a semantic communication system incorporating an IB-based source encoder, further emphasizing the importance of relevance in information compression. 
    However, these studies primarily focus on optimizing the relevance term of the IB framework, often ignoring the critical constraints imposed by model complexity at the user end. 
    The most recent and directly relevant advance is presented in \cite{wang2025task}, which formulates a classification-oriented compression problem under constraints on distortion, divergence, and a classification constraint. In this framework, the distortion and divergence terms are measured between the observed signal and the reconstruction. The classification constraint is defined as the conditional entropy of the source given the reconstruction.
    However, the “classification complexity” constraint is equivalent to the relevance constraint in the IB principle. It thus quantifies the uncertainty of the source conditioned on the reconstruction rather than the encoder's computational complexity. Furthermore, since its distortion and perception objectives aim to recover the indirect observation rather than the original semantic source, the framework may, in principle, not guarantee semantic fidelity to the underlying meaning.

    To address the problems above, we introduce a novel semantic rate-distortion framework that constrains the encoder's model complexity using MDL- and IB-based complexity measures, as well as distortion and divergence constraints that directly compare bit-wise and distribution-wise dissimilarity between the semantic source and its recovery. 
    
    \section{System Model and Problem Formulation}
    \label{sec:system_model}
    \subsection{System Model}
    \begin{figure}[ht]
		\centering
        \includegraphics[scale=0.88]{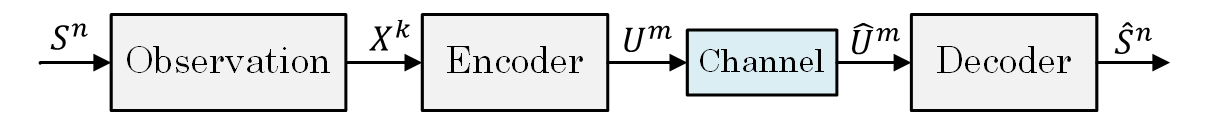}
		\caption{Illustration of the semantic communication model. }
		\label{fig:framework_proposed}
    \end{figure}
    We follow the same line as \cite{xiao2022rate} and consider a semantic communication system where the semantic source $S$ represents the intrinsic meaning or knowledge to be transmitted, as shown in Fig.~\ref{fig:framework_proposed}. This source may involve latent states or contextual information that cannot be directly observed by the transmitter due to hardware or resource limitations. Let $S^n$ denote an $n$-length sequence of independent and identically distributed (i.i.d.) samples drawn from $S$.
    At the transmitter, the encoder obtains $k$ indirect observations $X^k$ about  $S$, where the correlation between the semantic source $S$ and the indirect observation $X$ is characterized by a predetermined conditional probability distribution $p_{X|S}$. These observations are encoded into an $m$-length sequence $U^m$, which serves as a compressed representation of $X^k$. The encoding process is modeled by the conditional distribution $p_{U|X}$.
    The encoded sequence $U^m$ is transmitted over a noisy channel, and the decoder receives a corrupted version, denoted by $\hat U^m$. Using this received sequence, the decoder reconstructs the semantic information as $\hat{S}^n$. The entire system forms a Markov chain $S\to X\to U\to \hat S$.
    For simplicity and tractability of analysis, we focus on the case where the codelengths of the input and output sequences are equal.

    \subsection{Problem Formulation}
    In this paper, we would like to establish a semantic rate-distortion theory for analyzing the fundamental tradeoffs among the minimum achievable rate, system distortion, and model complexity.
    Our objective is to obtain the optimal coding scheme that minimizes the achievable rate under constrained model complexity while achieving a specified distortion level.
    We term this optimization problem the RDC problem.
    Formally, for any given $\cd,\theta_p,\theta_c\geq 0$,
    the RDC problem is formulated as:
    \begin{subequations}
    \label{eq:problem_rdc}
        \begin{align}
    	&\hspace{10pt}\min_{p_{XU\hat S}}I(U;\hat S)\hspace{30pt} \text{(Rate)}\\
        &\label{eq:distortion_term}
        \text{s.t.}\;
         \mathbb{E}[d(S,\hat S)]\leq \theta_d \hspace{15pt}\text{(Distortion)}\\
         &\;\;\;\;\;
        \label{eq:divergence_term}
        d_p(p_{S},p_{\hat S})\leq \theta_p \hspace{15pt} \text{(Semantic distance)}\\
        &\;\;\;\;\;
        \label{eq:complexity_term}
        I(X;U)\leq \theta_c \hspace{25pt} \text{(Complexity),}
        \end{align}
    \end{subequations}
    where the minimization is over all stochastic mappings  $p_{XU\hat S}$ that satisfies the Markov chain $S\to X\to U\to \hat S$,
    and $d_p(p_S,p_{\hat S})$ is any statistical divergence between the marginal distributions of the semantic source and its recovery.
    The solutions to the minimization problem in (\ref{eq:problem_rdc})
    is referred to as the RDC function.
    The optimization problem in (\ref{eq:problem_rdc}) consists of four fundamental components:

    \noindent{\bf Rate term:}
    This term quantifies the minimum achievable communication rate, measured as the mutual information $I(U; \hat{S})$ required to transmit the encoded representation $U$ over the channel to recover $\hat{S}$. 

    \noindent{\bf Distortion term:}
    This term measures the symbol-based distortion between the semantic source and its recovery.
    The distortion function $d$ can be chosen as the Hamming distance for discrete sources or the MSE for continuous sources.
    By retaining the conventional bit-level distortion metric, our RDC framework provides a direct generalization of the classical rate-distortion theory.
    
    \noindent{\bf Semantic distance term:}
    This term measures the degree of the task-specific semantic preservation between the semantic source and its recovery.
    In general, different tasks may require distinct divergence metrics and exhibit varying degrees of sensitivity to statistical differences.
    Therefore, the exact realization of the divergence metric $d_p$ depends on the specific task requirements.
    For example,
    in generation tasks, 
    Wasserstein distance serves as an effective divergence metric to generate image samples with high perceptual quality \cite{arjovsky2017wasserstein,blau2019rethinking}.
    One may therefore define the divergence metric as the Wasserstein distance between the marginal distributions of the semantic source and its recovery, written as
    $d_p(p_{S},p_{\hat S}):=d_W(p_S,p_{\hat S}),$
    where $d_W$ is the squared Wasserstein distance, defined as
    \begin{equation}
    d_{W}(p_S,p_{\hat S}) = \inf_{p_{S\hat S}\in \mathcal P_{S\hat S}} \mathbb {E}_{p_{S\hat S}}[\Vert S-\hat S \Vert^2_2],
    \end{equation}
    where $\mathcal P_{S\hat S}$ is the set of joint distribution of $S$ and $\hat S$.
    In classification, KL divergence is recognized as the preferred divergence metric in most scenarios \cite{murphy2012machine}.
    Notably, when $d_p$ is chosen as the KL divergence between $p_{S|X}$ and $p_{S|\hat{S}}$, i.e.,
    \begin{equation}
    \label{eq:distortion_IB_temp}
    d_\mathrm{KL}(p_{S|X} \Vert p_{S|\hat{S}}),
    \end{equation}
    it becomes equivalent to the mutual information $I(S; \hat{S})$, which corresponds precisely to the relevance constraint $I(U; S)$ in the IB framework~\cite{tishby2015deep}. 
    Thus, the divergence term in our RDC framework naturally generalizes the IB formulation, while allowing for other divergence measures tailored to different semantic tasks.
    %

    \noindent{\bf Complexity term:} 
    This term quantifies the encoder's model complexity, and $\cc$ is the maximum allowable complexity degree.
    Unlike the IB framework, which seeks a compressed representation $U$ of $X$ that preserves relevant information about $S$ while minimizing complexity,
    our proposed RDC approach focuses on optimizing recovery $\hat S$ to preserve the most relevant semantic information from the source.
    In practical scenarios, the complexity term $I(X;U)$ quantifies the DNN-based encoder's model complexity deployed at the user end.
    This formulation aligns well with emerging semantic communication systems, 
    where efficient encoding and decoding are critical for deployment in resource-constrained environments.

    \section{RDC for Gaussian and Binary Semantic Sources}
    \label{sec:particular}
    In this section, we propose closed-form solutions to the RDC problem for two specific types of semantic sources: Gaussian and binary, which allow us to investigate how each component of the RDC problem interacts with the others.

    \subsection{Gaussian Semantic Source}
    \newcommand{\vo}{\rho}
	\newcommand{\vs}{\eta}
    \newcommand{\dx}{\theta_u}

    To better connect the theoretical results to real-world data distributions,
    we first consider a special case of a continuous semantic source, the Gaussian source.
    We consider the case where the semantic source follows a Gaussian distribution, denoted as $S\in\mathcal N(0,1)$. 
	The correlation between the semantic source $S$ and indirect observation $U$, and that between $X$ and the output representation $U$  are characterized as
    \begin{align}
    \label{eq:gaus_assumption1}
    S=&\gamma X+\sqrt{1-\gamma^2}Z_1,\\
    \label{eq:gaus_assumption2}
        X=&\rho U+\sqrt{1-\rho^2}Z_2,
    \end{align}
	where $\gamma,\rho\in(0,1]$,
	$Z_1\in\mathcal N(0,1)$ is a random variable independent of $X$,
    $Z_2\in\mathcal N(0,1)$ is also a random variable independent of $U$.
    The linear assumption in (\ref{eq:gaus_assumption2}) is well-justified, as it aligns with the theoretical guarantees of the Gaussian Information Bottleneck (GIB) framework that the optimal solution to the GIB problem is indeed a linear projection \cite{chechik2003information}.
    %
    In this Gaussian case, we adopt MSE and Wasserstein distance as the constraints on distortion and semantic distance metrics, respectively, defined as
    \begin{equation}
        \label{eq:gaus_c1}
        \mathbb{E}[d(S,\hat S)]\leq \cd,\quad d_W(p_S,p_{\hat S})\leq \cp.
    \end{equation}
    Then we have the following theorem with respect to the closed-form
    expression of the Gaussian RDC function:

    \begin{theorem}
    \label{th:gaus}
    For any given $\cd\geq 0$, $0\leq \cp\leq 1$ and $\cc\geq 0$, the RDC function for Gaussian semantic source $\mathcal N(0,1)$ under complexity constraint in (\ref{eq:complexity_term}), the distortion and semantic distance constraints specified in (\ref{eq:gaus_c1}) is
        \begin{align}
        &R^{\mathcal G}(\cd,\cp,\cc)=\nonumber\\
            &\begin{cases}
            -\dfrac12\log\left(\dfrac{\cd}{\gamma\rho}-\dfrac{1-\gamma\rho}{\gamma\rho}(1+\sigma^2)\right)
            & \hspace{-2pt}\text{if }  \theta_1\leq \cd<\theta_2\\
            -\dfrac12\log\left(1-\left(\dfrac{
        1+\sigma^2-\cd}{2\gamma\rho\sigma}\right)^2\right)&
        \hspace{-2pt}\text{if } \theta_2\leq\cd<\theta_3\\
            0 & \hspace{-2pt}\text{if }\cd\geq \theta_3,
        \label{eq:gaus_rdc}
        \end{cases}
        \end{align}
        where $\rho=\sqrt{1-2^{-2\theta_c}}$, $\theta_1=(1-\gamma\rho)(1+\sigma^2)$, $\theta_2=1+\sigma^2-2\gamma\rho\sigma^2$, $\theta_3=1+\sigma^2$,
        $\sigma=1-\sqrt{\cp}$.
    \end{theorem}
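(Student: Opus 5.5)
The plan is to collapse the three constraints of (\ref{eq:problem_rdc}) to a two-parameter Gaussian optimization. The steps are: turn the complexity constraint into a bound on $\rho$, restrict to jointly Gaussian reconstructions, and then solve a scalar problem by cases according to which constraints bind.

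\emph{Step 1 (complexity $\Rightarrow$ $\rho$).} Under (\ref{eq:gaus_assumption2}), $U$ and $X$ are jointly Gaussian with unit variance and correlation $\rho$. Hence
\[
I(X;U)=-\tfrac12\log(1-\rho^2),
\]
and (\ref{eq:complexity_term}) is equivalent to $\rho\le\sqrt{1-2^{-2\cc}}$. Composing (\ref{eq:gaus_assumption1}) and (\ref{eq:gaus_assumption2}) gives
\[
S=\gamma\rho U+\sqrt{1-\gamma^2\rho^2}\,Z,
\]
with $Z$ independent of $U$. Thus the encoder effectively observes a noisy version of $S$ with correlation $\gamma\rho$. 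Every achievable $(\cd,\cp)$ pair at a given $\rho$ remains achievable at a larger $\rho$, since the decoder could add noise to $U$. So the RDC function is nonincreasing in $\rho$, and I will take the complexity constraint active, $\rho=\sqrt{1-2^{-2\cc}}$.

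\emph{Step 2 (Gaussian reduction).} Fix any $p_{\hat S|U}$. I will replace $\hat S$ by a jointly Gaussian $\hat S_G$ with the same second moments $(\mathrm{Var}\,\hat S,\ \mathrm{Cov}(U,\hat S))$. This preserves $\mathbb E[(S-\hat S)^2]$, because $S$ is linear in $U$ plus independent noise. It does not increase $I(U;\hat S)$, by the max-entropy argument of classical Gaussian rate-distortion. It does not increase $d_W(p_S,p_{\hat S})$, because for a fixed variance the Gaussian marginal minimizes the squared Wasserstein distance to $\mathcal N(0,1)$; I expect this to follow from the Gelbrich bound, which is tight for Gaussians. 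Taking zero mean, write $s=\sqrt{\mathrm{Var}\,\hat S}$ and $r=\mathrm{corr}(U,\hat S)$. Then
\[
I(U;\hat S)=-\tfrac12\log(1-r^2),\quad
\mathbb E[(S-\hat S)^2]=1+s^2-2\gamma\rho\, r s,\quad
d_W=(1-s)^2 .
\]
The semantic-distance constraint becomes $s\ge\sigma=1-\sqrt{\cp}$ (for $s\le1$). The problem is now to minimize $|r|$ subject to $1+s^2-2\gamma\rho rs\le\cd$ and $s\ge\sigma$.

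\emph{Step 3 (case analysis).} For fixed $s$, the smallest feasible correlation is $r=(1+s^2-\cd)/(2\gamma\rho s)$.
\begin{itemize}
\item For $\cd\ge 1+\sigma^2=\theta_3$, the choice $s=\sigma$, $r=0$ is feasible, so the rate is $0$.
\item For smaller $\cd$, minimize $(1+s^2-\cd)/s$ over $s\ge\sigma$. The unconstrained minimizer lies below $\sigma$ on the range $\theta_2\le\cd<\theta_3$, so the perception constraint binds at $s=\sigma$. This gives the middle branch $-\tfrac12\log\bigl(1-((1+\sigma^2-\cd)/(2\gamma\rho\sigma))^2\bigr)$.
\item For $\theta_1\le\cd<\theta_2$, the optimum is interior or saturated by $r\le 1$. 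Here I must recover the first branch by solving the stationarity condition jointly with the distortion constraint and checking continuity with the middle branch at $\cd=\theta_2$.
\end{itemize}
Finally, I will verify that $\cd<\theta_1$ is infeasible, i.e.\ it would require $|r|>1$.

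The main obstacle is Step 3's first regime. The stated branch $-\tfrac12\log\bigl(\cd/(\gamma\rho)-(1-\gamma\rho)(1+\sigma^2)/(\gamma\rho)\bigr)$ does not drop out of the naive $s$-optimization, and I must match the thresholds $\theta_1,\theta_2$ exactly. I would first try parametrizing by $s$ with $r=1$ at the boundary, then interpolating between the endpoints. If that fails, I would check whether the authors' branch corresponds to a time-sharing or convex-hull argument between the point $\cd=\theta_1$ (where $r=1$, $s=\sigma$ yields an unbounded rate) and the point $\cd=\theta_2$. I would then confirm convexity of the resulting curve in $\cd$.
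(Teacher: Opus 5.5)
\textbf{There is a genuine gap, and it cannot be closed.} Your Steps 1--2 are sound. The Gelbrich step is exact in one dimension: $d_W\ge m^2+(1-s)^2$, with equality for Gaussians. The first regime you flag in Step 3, however, is not a technical obstacle. For $\gamma\rho<1$ and $\cp>0$, the stated first branch and the thresholds $\theta_1,\theta_2$ are not the value of the problem, and your own $(r,s)$ optimization already gives the true value.

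\emph{What your optimization actually gives.} Write $a:=\gamma\rho$. Minimizing $s+(1-\cd)/s$ over $s\ge\sigma$ gives the interior point $s=\sqrt{1-\cd}$ whenever $\sqrt{1-\cd}\ge\sigma$. Hence, for $\sigma\le a$:
\begin{itemize}
\item the rate is $\tfrac12\log\frac{a^2}{\cd-1+a^2}$ on $1-a^2<\cd<1-\sigma^2$;
\item the middle branch holds on $1-\sigma^2\le\cd<1+\sigma^2$;
\item the rate is $0$ beyond that.
\end{itemize}
For $\sigma>a$ only the middle branch survives, on $1+\sigma^2-2a\sigma<\cd<1+\sigma^2$. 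This coincides with the statement only when $a=1$ or $\cp=0$.

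\emph{Two checks that the statement itself fails otherwise.}
\begin{itemize}
\item Since $S=aU+\sqrt{1-a^2}Z$ with $Z$ independent of $(U,\hat S)$, every scheme satisfies $\mathbb E[(S-\hat S)^2]=\mathbb E[(aU-\hat S)^2]+1-a^2\ge 1-a^2$. Yet at $\cp=1$ ($\sigma=0$) the statement assigns the finite rate $\tfrac12\log\frac{a}{\cd+a-1}$ on $[1-a,1)$, which includes distortions below $1-a^2$.
\item At $a=0.5$, $\sigma=0.3$, $\cd=0.95$, the stated first branch gives about $0.152$ bits. The true minimum, with $s=\sigma$ binding and $r=0.14/0.3$, is about $0.177$ bits.
\end{itemize}

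\emph{Why neither fallback can work.} Setting $r=1$ gives infinite rate. Time-sharing or convex-hull constructions are mixtures of test channels $p_{\hat S|U}$, so they are already included in your single-letter minimization: mutual information is convex in $p_{\hat S|U}$, the MSE is linear, and $d_W$ is convex in $p_{\hat S}$. They therefore cannot go below the value you computed. Your planned check that $\cd<\theta_1$ is infeasible would pass but is insufficient, because $[\theta_1,1-a^2)$ is infeasible as well.

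\textbf{Where the paper's route goes wrong.} The paper rewrites $d_S\le\cd$ as $d_U\le\theta_u=\cd/(\gamma\rho)-(1-\gamma\rho)(1+\sigma^2)/(\gamma\rho)$. It then treats the problem as direct rate-distortion(-perception) coding of $U\sim\mathcal N(0,1)$, plugging Shannon's $\tfrac12\log(1/\theta_u)$ into the perception-inactive regime with $\sigma$ frozen at $1-\sqrt{\cp}$. That ``equivalence'' is where the discrepancy enters. $\sigma^2=\mathrm{Var}(\hat S)$ is a decision variable inside $\theta_u$. The Shannon-optimal test channel at distortion $\theta_u$ has $\mathrm{Var}(\hat S)=1-\theta_u$, not $(1-\sqrt{\cp})^2$, and when the perception constraint is slack $\mathrm{Var}(\hat S)$ is not pinned at all. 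Your joint optimization over $(r,s)$ handles this coupling correctly. What your argument actually proves is the corrected formula above. It reduces to the theorem's expression only when $\gamma\rho=1$ (direct observation with unconstrained complexity) or $\cp=0$, so you cannot finish by matching the stated branch.
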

    \begin{IEEEproof}
        See \ref{app:proof_gaus}.
    \end{IEEEproof}

    \begin{figure} [t]
		\centering
		\subfloat[]{
			\includegraphics[scale=0.27]{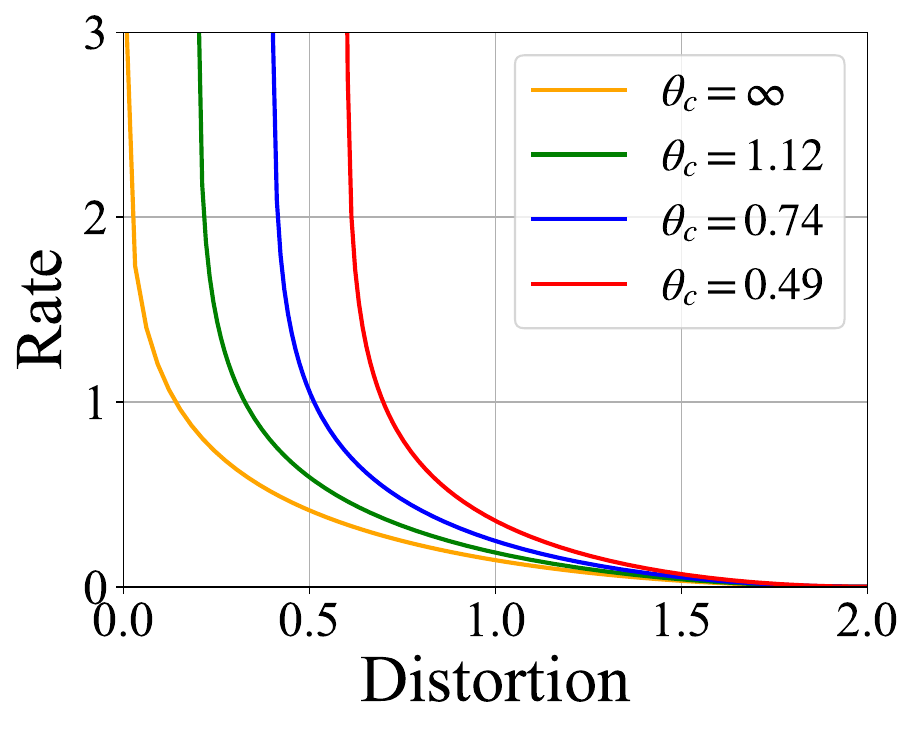}
		}
		\subfloat[]{
			\includegraphics[scale=0.27]{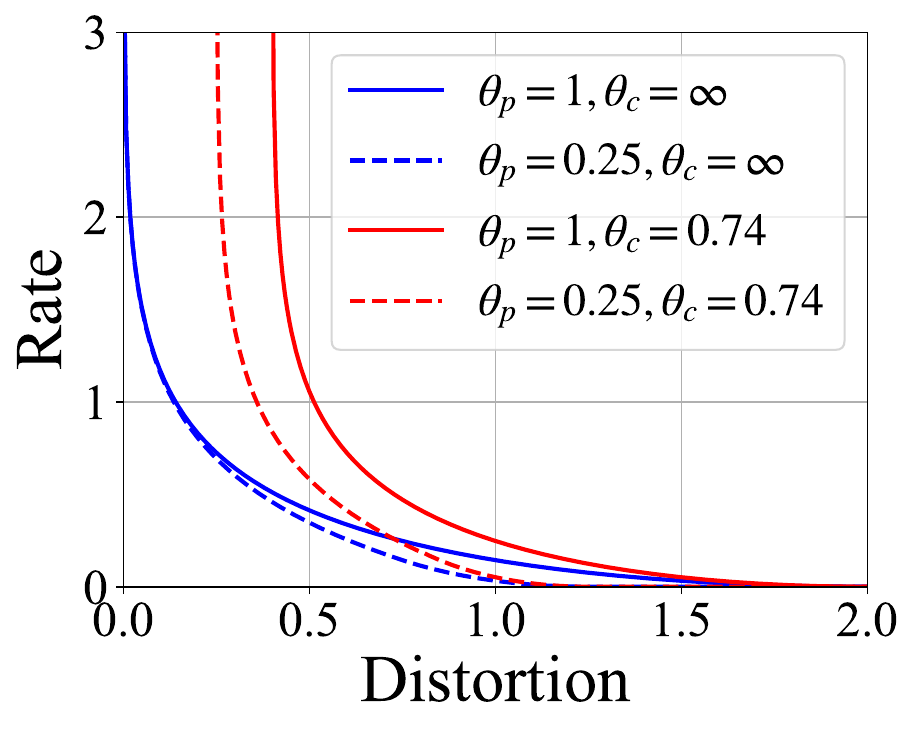} 
        }
		\caption{Curve plots of the RDC functions for Gaussian semantic  sources under
        (a) $\cp=0$ and various $\cc$;
        (b) various $\cp$ and $\cc$.}
		\label{figs:gaus_1} 
	\end{figure}

    In Fig.~\ref{figs:gaus_1}(a),
    we illustrate the Gaussian RDC functions under different values of complexity constraint $\cc$.
    We observe that, for a fixed distortion level, increasing the complexity constraint reduces the achievable rate.
    This corroborates the fact that, even under constrained communication resources, it is possible to achieve a satisfactory distortion level by leveraging additional model complexity at the user end.
    This observation justifies the importance of jointly optimizing communication and computation in semantic communication systems, as model complexity can effectively compensate for limitations in transmission capacity. Such a tradeoff is particularly relevant in practical scenarios where communication bandwidth is limited, but computational power is increasingly abundant at edge devices.
    
    The above results reveal a three-way tradeoff among the minimum achievable rate, distortion, and complexity, namely the RDC tradeoff. 
    Furthermore, Fig.~\ref{figs:gaus_1}(a) illustrates that increasing the complexity level $I(X;U)$  leads to a reduction in the asymptotic minimum distortion.
    This observation reflects the well-known tradeoff in representation learning that higher model complexity, or deeper NN architectures, can achieve better optimal performance. 
    In this context, a higher value of $\rho$, corresponding to greater model complexity, indicates a more precise observation of the model input. 
    We also note that, by setting $\theta_c=\infty$ in Theorem \ref{th:gaus}, the decoder has the observation of the semantic source, i.e., $S=X$.
    This special case reveals a fundamental characteristic of continuous semantic sources that achieving asymptotically perfect recovery, theoretically, requires an encoder with infinite model complexity.

    \begin{figure} [t]
		\centering
        \vspace{-1em}
		\subfloat[]{
			\includegraphics[scale=0.36]{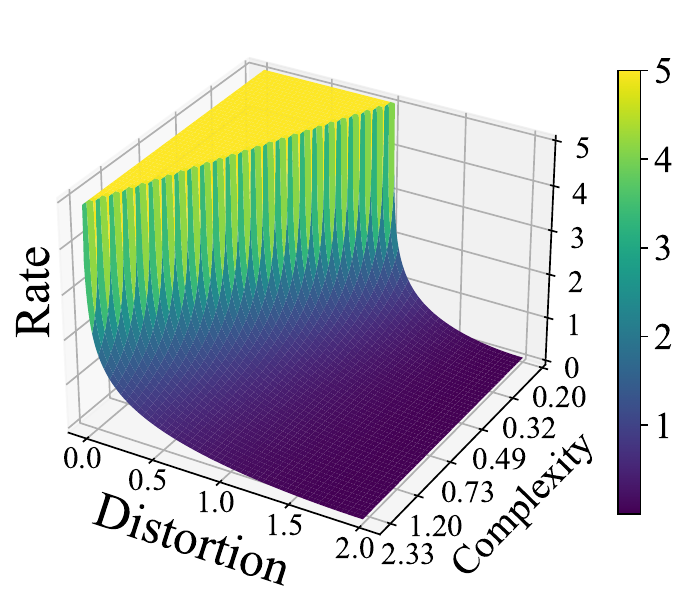}
		}
		\subfloat[]{
			\includegraphics[scale=0.36]{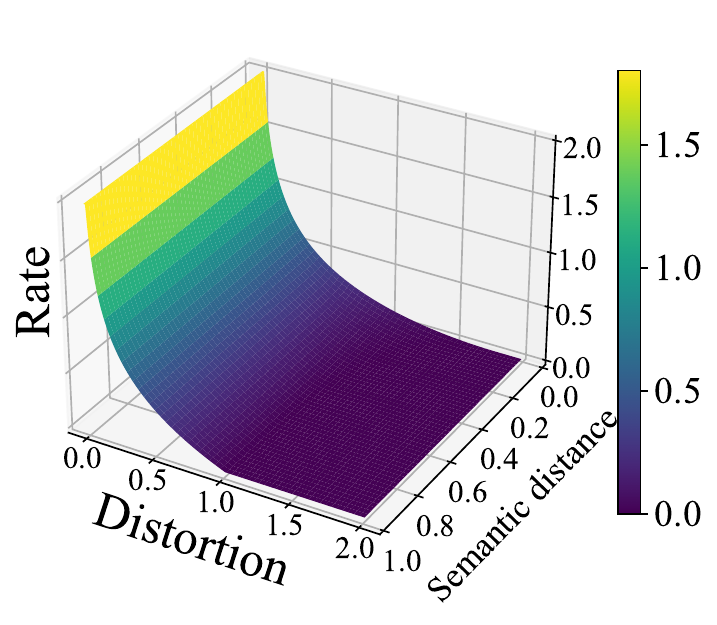} 
        }
		\caption{3D Surf plot of the RDC tradeoff of Gaussian semantic  sources:
        (a) RDC functions under $\cp=0$;
        (b) RDC functions under $\cc=0.74$.}
		\label{figs:gaus_3d} 
	\end{figure}
    We also demonstrate the Gaussian RDC curves under some different degrees of semantic distance, as measured by $\cp$, in Fig.~\ref{figs:gaus_1}(b).
    We can clearly observe that under a given achievable rate, increasing the bit-wise (lower $\cd$) quality leads to lower perceptual quality (higher $\cp$).
    Moreover, an increase in perceptual quality leads to a higher achievable rate.
    The above RDP tradeoff is aligned with the existing works \cite{blau2018perception,wagner2022rate,chai2023rate}.
    
    To obtain a more intuitive understanding of the interaction among the components of the Gaussian RDC problem, in Fig.~\ref{figs:gaus_3d}, we demonstrate the 3D contour plots of Gaussian RDC functions.
    Figs.~\ref{figs:gaus_3d}(a) and \ref{figs:gaus_3d}(b) illustrate the RDC tradeoff and the RDP tradeoff respectively \cite{blau2019rethinking}.
    These three-way tradeoff relationships together constitute a four-way tradeoff among rate, distortion, semantic distance, and complexity.
    This means our RDC framework expands a new dimensionality of the traditional lossy source coding problem.

    \begin{figure} [t]
		\centering
		\subfloat[$\theta_c=\infty$]{
			\includegraphics[scale=0.27]{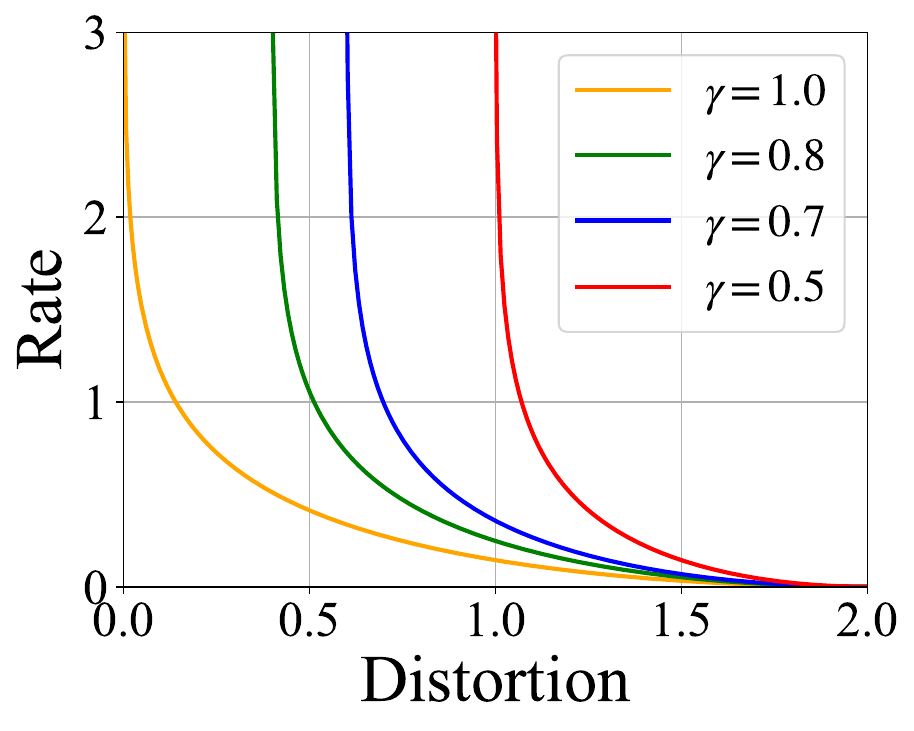}
		}
		\subfloat[$\theta_c=1.12$]{
			\includegraphics[scale=0.27]{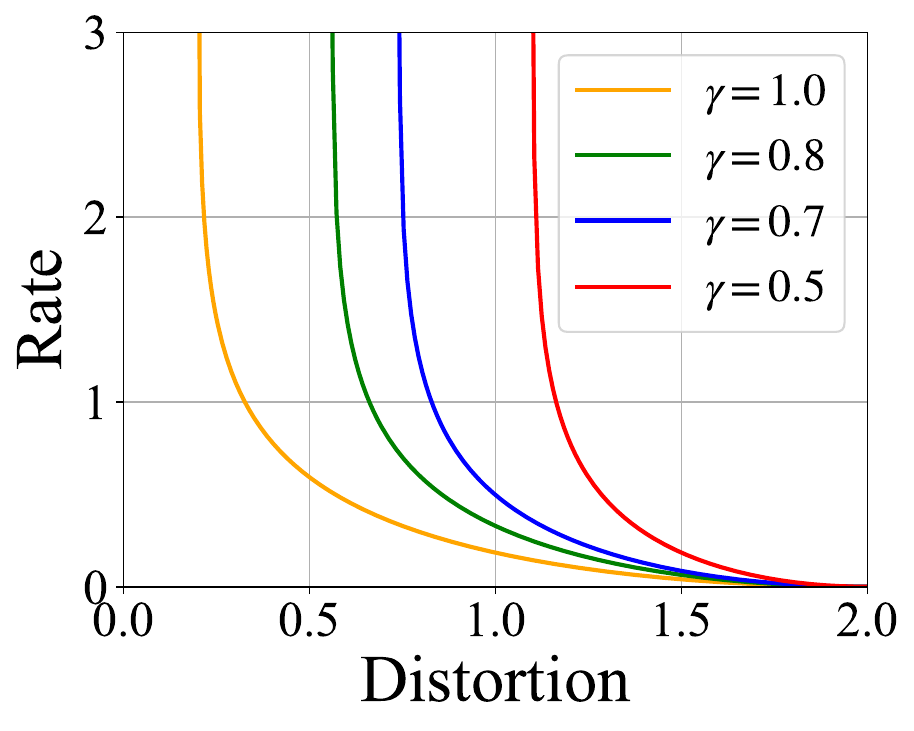} 
        }
		\caption{Curve plots of the RDC functions for Gaussian semantic  sources under $\cp=0$ and
        (a) $\cc=\infty$;
        (b) $\cc=1.12$.}
		\label{figs:gaus_ind} 
	\end{figure}

    In Fig.~\ref{figs:gaus_ind}, we illustrate the Gaussian RDC curves with different $\gamma$ values,
    to observe how the process of indirect observation impacts the behavior of RDC functions.
    From Fig.~\ref{figs:gaus_ind}(a), we observe that for fixed distortion level, when $\gamma$ decreases, the achievable rate increases.
    This can be interpreted that the dependency between the source $S$ and the observation $X$  decreases as $\gamma$,
    which means the encoder has a more disturbed observation of the source.
    In this case, the encoder requires higher rates to achieve the same distortion level as the direct observation case.
    
    In Fig.~\ref{figs:gaus_ind}(b), we observe that,
    even under the case when the encoder has a direct observation of the source ($\gamma=1$, or equivalently, $X=S$),
    the minimum asymptotic distortion $\theta_1$ is greater than zero, since $\theta_1$ is determined by both $\gamma$ and $\theta_c$.
    This also indicates that the impact of the complexity constraint is similar to that of the indirect observation.
    Moreover, under the complexity constraint of finite $\theta_c$, the output representation $U$ can be viewed as an indirect observation of the encoder input $X$, with the uncertainty induced by the conditional probability $p_{U|X}$ of the encoder.
    Therefore, only when the encoder has a direct observation $X=S$ with infinite complexity ($\theta_c=\infty$, or equivalently, $U=X$), the minimum asymptotic distortion equals zero ($\theta_1=0$).
    , as shown in the RDC curve when $\gamma=1,\cc=\infty$ of Fig.~\ref{figs:gaus_ind}.
    
    To have a more intuitive insight into this phenomenon, we consider a special case of Gaussian RDC.
    Specifically,
    taking $\cp=1$ in \eqref{eq:gaus_rdc},
    we have
    \begin{equation}
         R^{\mathcal G}(\cd,1,\cc)=
         \frac12\log\frac{\gamma\rho}{\cd+\gamma\rho-1}
    \end{equation}
    for $1-\gamma\rho\leq\cd\leq 1$.
    This is equivalent to the indirect rate-distortion function with the covariance between the source and the observation being $\gamma\rho$.
    This means the parameters of indirect observation $\gamma$ and the model complexity $\theta_c$ together govern the degree of inevitable uncertainty of the system.

    The result of RDC function in Theorem \ref{th:gaus} generalizes the well-known RDP theory.
    Formally, taking $\cc=\infty$ and $\gamma=1$ in \eqref{eq:gaus_rdc}, which means that the complexity constraint is inactive and the encoder has a direct observation of the source $S$, we have $\rho=1$ and thus
    \begin{equation}
    \hspace{-8pt}
    \label{eq:gaus_rdp_traditional}
        R^{\mathcal G}(\cd,\cp,\infty)
        \hspace{-1pt}
        =\hspace{-1pt}
            \begin{cases}
            \frac12\log\frac1{\cd}
            &\hspace{-5pt} \text{if }  0\leq  \cd<\theta_2'\\
            \frac12\log\frac{1}{1-\left(\frac{1+\sigma^2-\cd}{2\sigma}\right)^2}& \hspace{-5pt}
        \text{if } \theta_2'\leq\cd<\theta_3\\
            0 &\hspace{-5pt} \text{if }\cd\geq \theta_3.
        \end{cases}
        \hspace{-6pt}
    \end{equation}
    where $\theta_2'=1-\sigma^2$ and $\theta_3$ are as defined in Theorem \ref{th:gaus}.
    This result is aligned with the existing results of the Gaussian RDP function under the Wasserstein distance (see Theorem 1 of \cite{zhang2021universal} and Theorem 1 of \cite{chai2025on}).

    \subsection{Binary Semantic Source}
    \label{subsec:binary}
    \begin{figure} [t]
		\centering
		\subfloat[]{
			\includegraphics[width=0.48\linewidth]{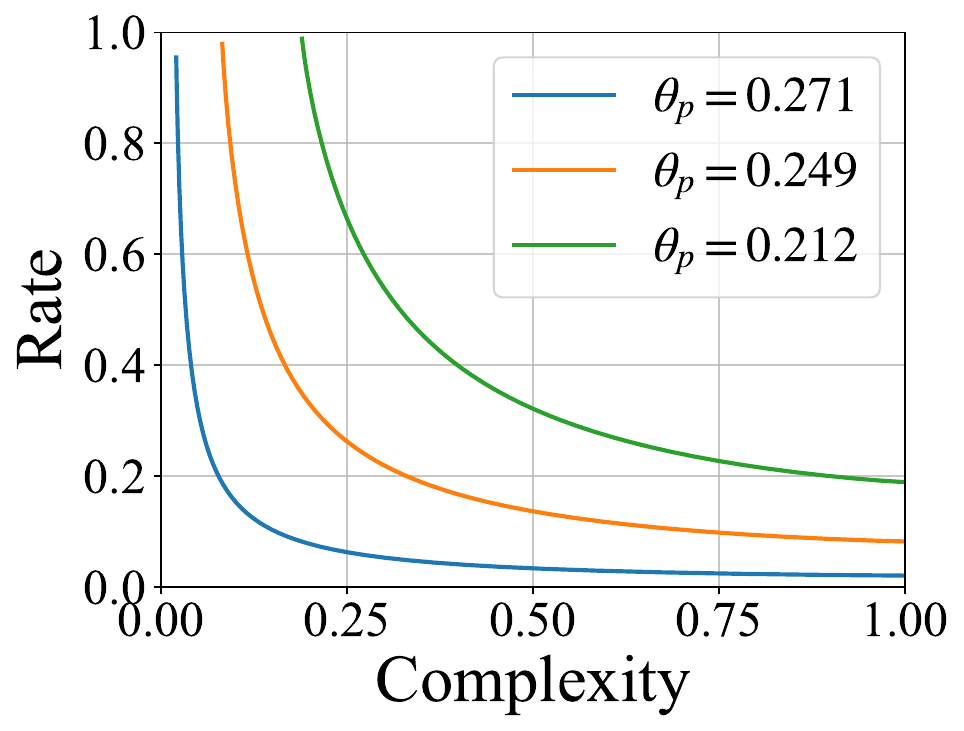}
			\label{fig:bin_1a}	
		}
		\subfloat[]{
			\includegraphics[width=0.46\linewidth]{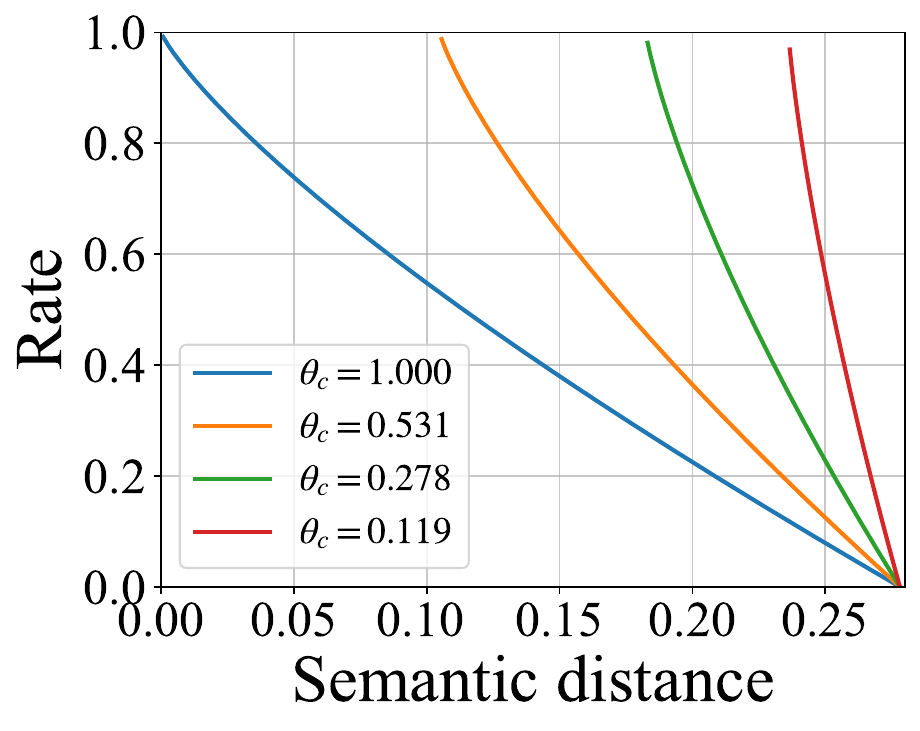} 
			\label{fig:bin_1b}
        }
		\caption{Curve plots of binary RDC tradeoffs:
        (a) Rate-complexity tradeoff;
        (b) Rate-distortion tradeoff.}
		\label{figs:bin} 
	\end{figure}
    
    We also consider another special case of a discrete semantic source,
    where the source follows a Bernoulli distribution.
    Formally, assume the semantic source $S$ and the indirect observation $X$ are doubly symmetric binary sources with a given crossover probability $q_{SX}\in[0,\frac12]$.
    %
    In this case, we adopt the generalization constraint of the IB problem, 
    defined as
    \begin{equation}
    \label{eq:bin_distortion_IB}
        d_\mathrm{KL}(p_{S|X} \Vert p_{S|\hat S})\leq \cp
    \end{equation}
    Then we have the following theorem with respect to the closed-form expression of the binary RDC function:
    \begin{theorem}
    \label{th:bin}
    Under the complexity constraint in (\ref{eq:complexity_term}) and distortion constraint in (\ref{eq:bin_distortion_IB}), the RDC function for binary sources has the following closed-form solution:
        \begin{equation}
            R^{\mathcal B}(\cp,\cc)=1-H_b(q_{U\hat S}),
        \end{equation}
        where $H_b(\cdot)$ is the binary entropy function and $q_{U\hat S}$ is determined by the equation
        \begin{equation}
            \cp=H_b(q_{S\hat S})-H_b(q_{U\hat S}).
        \end{equation}
        Here $q_{S\hat S}$ is 
        \begin{align}
		q_{S\hat S}:=&q_{SX}+q_{XU}+q_{U\hat S}+4q_{SX}q_{XU}q_{U\hat S}\nonumber\\
        &-2(q_{SX}q_{XU}+q_{SX}q_{U\hat S}+q_{XU}q_{U\hat S}),
        \end{align}
        where $q_{XU}$ is determined by $\cc=1-H_b(q_{XU})$.
    \end{theorem}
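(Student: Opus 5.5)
The plan is to reduce the binary RDC problem to a cascade of binary symmetric channels and then handle the two active constraints one at a time. As in Theorem \ref{th:gaus}, the encoder side is fixed by the complexity constraint. Writing the forward test channel as $X=U\oplus Z_2$ with $Z_2\sim\mathrm{Bern}(q_{XU})$, and taking $U$ uniform, gives $I(X;U)=1-H_b(q_{XU})$. Because $I(X;U)$ is decreasing in $q_{XU}\in[0,\frac12]$, the complexity constraint (\ref{eq:complexity_term}) is tight at the optimum. The reason is that a larger $q_{XU}$ only makes $U$ a noisier observation of $S$, which can only hurt the semantic constraint. So I would set $\cc=1-H_b(q_{XU})$, which is exactly how $q_{XU}$ is specified in the statement.

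Next I would model the decoder as a third BSC, $\hat S=U\oplus Z_3$ with $Z_3\sim\mathrm{Bern}(q_{U\hat S})$. The chain $S\to X\to U\to\hat S$ is then a cascade of three BSCs. Its end-to-end crossover probability is the binary convolution $q_{SX}\ast q_{XU}\ast q_{U\hat S}$. Expanding $(a\ast b)\ast c$ with $a\ast b=a+b-2ab$ gives exactly the expression for $q_{S\hat S}$ in the statement. All marginals are uniform, so the rate is $I(U;\hat S)=1-H_b(q_{U\hat S})$.

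For the semantic constraint, I would use the identity stated after (\ref{eq:distortion_IB_temp}): the KL form (\ref{eq:bin_distortion_IB}) reduces to an information difference. Concretely, $\mathbb{E}\,d_{\mathrm{KL}}(p_{S|X}\Vert p_{S|\hat S})=I(S;X)-I(S;\hat S)$, by the Markov chain and the standard IB identity. The normalization in the theorem, $\cp=H_b(q_{S\hat S})-H_b(q_{U\hat S})$, should then follow by writing the entropies of the cascade in terms of the crossover probabilities, with the $q_{SX}$-dependent terms cancelling. I expect this step to be the main obstacle: the theorem's expression is not literally $H_b(q_{S\hat S})-H_b(q_{SX})$, so I would have to verify carefully which information quantity the authors identify with the divergence term. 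If a mismatch appears, I would adopt the divergence as normalized by the stated equation, i.e.\ as an equality defining $q_{U\hat S}$ once it is tight.

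Finally, I would show the constraint binds and the optimizer is this symmetric one. Both $1-H_b(q_{U\hat S})$ and the divergence are monotone in $q_{U\hat S}$ over $[0,\frac12]$: the rate decreases in $q_{U\hat S}$ while the divergence increases. So the minimum rate is attained where the divergence constraint holds with equality. This gives $R^{\mathcal B}(\cp,\cc)=1-H_b(q_{U\hat S})$, with $q_{U\hat S}$ solving the stated equation.

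For the converse, that no non-symmetric $p_{\hat S|U}$ or non-uniform $U$ does better, I would use a symmetrization argument. Averaging any feasible channel with its bit-flipped version keeps the problem feasible, because of convexity of the KL divergence and concavity/convexity of mutual information in the relevant arguments. This averaging does not increase $I(U;\hat S)$, and it produces a BSC. Mrs.\ Gerber's Lemma applied to the cascade can serve as a backup tool if the symmetrization alone does not yield the bound.
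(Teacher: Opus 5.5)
\paragraph{The stated normalization.} Your identity $\mathbb{E}\,d_{\mathrm{KL}}(p_{S|X}\Vert p_{S|\hat S})=I(S;X)-I(S;\hat S)=H_b(q_{S\hat S})-H_b(q_{SX})$ is correct. It is exactly where the paper's own proof ends up, so the $H_b(q_{U\hat S})$ in the statement is a slip for $H_b(q_{SX})$.

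Your fallback of adopting the stated normalization would break your monotonicity step. $H_b(q_{S\hat S})-H_b(q_{U\hat S})=I(U;\hat S)-I(S;\hat S)$ is \emph{decreasing} in $q_{U\hat S}$ on $[0,\tfrac12]$. Its derivative $(1-2q_{SU})H_b'(q_{S\hat S})-H_b'(q_{U\hat S})$ is $\le 0$, since $H_b'$ is positive and decreasing on $(0,\tfrac12)$ and $q_{S\hat S}\ge q_{U\hat S}$. It also vanishes at $q_{U\hat S}=\tfrac12$. Under that reading the constraint never binds and the minimum rate is $0$.

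With $H_b(q_{SX})$ your argument works. The crossover $q_{SX}\ast q_{XU}\ast q_{U\hat S}$ is increasing in both $q_{XU}$ and $q_{U\hat S}$, so among cascades of doubly symmetric binary channels both constraints are tight and the formula follows.

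\paragraph{The converse.} This is the genuine gap, and it cannot be closed as you plan, because the formula is not the minimum over all binary test channels. The semantic term carries $-I(S;\hat S)$, so convexity of mutual information works against feasibility. Symmetrizing also makes $p_U$ uniform, which can raise $I(U;\hat S)$.

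A concrete counterexample: let $q_{SX}=0$, take the Z-channel encoder $p(U{=}1|X{=}0)=0$, $p(U{=}1|X{=}1)=\tfrac12$, and set $\hat S=U$.
\begin{itemize}
\item Complexity: $I(X;U)=H_b(\tfrac14)-\tfrac12\approx 0.311$.
\item Divergence: $1-I(X;U)\approx 0.689$.
\item Rate: $H(U)=H_b(\tfrac14)\approx 0.811$.
\end{itemize}
At $(\theta_p,\theta_c)=(0.689,0.311)$ the theorem, under either reading, gives $q_{XU}\approx 0.184$, $q_{U\hat S}=0$ and rate $1$. Averaging this encoder with its flip gives a BSC$(\tfrac14)$ with divergence $\approx 0.811>0.689$ and rate $1$, so symmetrization breaks feasibility here. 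This is not a boundary artifact: with $q_{SX}=0.05$ the same scheme, at its own levels $(\theta_p,\theta_c)\approx(0.486,0.311)$, still has rate $0.811$ against the formula's $\approx 0.86$.

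\paragraph{What is true, and how the paper compares.} The result holds over \emph{symmetric} encoders. With a BSC encoder, $S=U\oplus N$ where $N\sim\mathrm{Bern}(q_{SU})$ is independent of $(U,\hat S)$. Mrs.\ Gerber's Lemma then gives $H(S|\hat S)\ge H_b\bigl(q_{SU}\ast H_b^{-1}(1-I(U;\hat S))\bigr)$, which is exactly the decoder-side converse. So that part of your backup is sound.

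The paper's proof has the same blind spot, only hidden. It passes to the dual distortion--rate--complexity problem and solves Lagrangian stationarity in $(p_0,p_1,q_0,q_1)$. However, its formula for $I(U;\hat S)$ presumes uniform $U$, and it substitutes $q_{XU}=\tfrac{p_0+p_1}{2}$ and $q_{U\hat S}=\tfrac{q_0+q_1}{2}$ into the cascade formula; both steps are valid only for symmetric channels. It also never checks that the stationary point is a global minimum.

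Within the symmetric class, your monotonicity argument plus Mrs.\ Gerber's Lemma is a cleaner and complete proof. State that restriction explicitly rather than claiming a symmetrization converse.
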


    \begin{IEEEproof}
        See \ref{app:proof_th1}.
    \end{IEEEproof}

    The results of Theorem \ref{th:bin} generalize the solutions to the classical binary IB problem.
    More specifically, when no compression is applied in the proposed coding scheme, i.e., $\hat S=U$, we obtain 
    \begin{equation}
        I(\hat S;S)=I(U;S)=1-H_b(q_{SU}),
    \end{equation}
    which corresponds to the generalization term of the binary IB problem.
    Moreover, the resulting coding scheme aligns exactly with that of the binary IB problem (see Section 3.1.1 of \cite{zaidi2020information}).
    Our work thus not only provides a novel perspective on  RDC tradeoff, but also establishes a direct connection to the well-established information bottleneck theory.
    
    In Fig.~\ref{figs:bin}, we plot the mutual information terms of the RDC problem of binary sources.
    As depicted in Fig.~\ref{fig:bin_1a}, each curve represents the trade-off between the achievable rate and model complexity for fixed values of $\cd$.
    We note that, to achieve the same distortion level, as complexity increases, the achievable rate decreases, and conversely, lower complexity results in a higher achievable rate.
    This corroborates the fact that, even under constrained communication resources, it is possible to achieve a satisfactory distortion level by leveraging additional computational resources at the user end.
    This observation justifies the importance of jointly optimizing communication and computation in semantic communication systems, as computational resources can effectively compensate for limitations in transmission capacity. Such a trade-off is particularly relevant in practical scenarios where communication bandwidth is limited, but computational power is increasingly abundant at edge devices.
    The above three-way tradeoff among rate, distortion, and complexity is consistent with the results of the Gaussian RDC functions in Fig.~\ref{figs:gaus_1}(a).
    The distinct tradeoff revealed by our binary RDC function highlights the importance of explicitly modeling encoder complexity. For comparison, one recent line of work formulates a constraint on $H(S|\hat{S})$ and treats it as a form of complexity measure ~\cite{wang2025task}. 
    Their analysis shows that this constraint does not create an active tradeoff with perception in the binary case. In contrast, our complexity term $I(X;U)$ directly quantifies the encoder's representational cost via the MDL principle. Our binary RDC results demonstrate a clear three-way tradeoff among rate, semantic fidelity, and model complexity, confirming that $I(X;U)$ captures a fundamental and active resource dimension essential for balancing computation and communication in semantic encoding systems.

    \begin{figure} [t]
		\centering
		\includegraphics[scale=0.50]{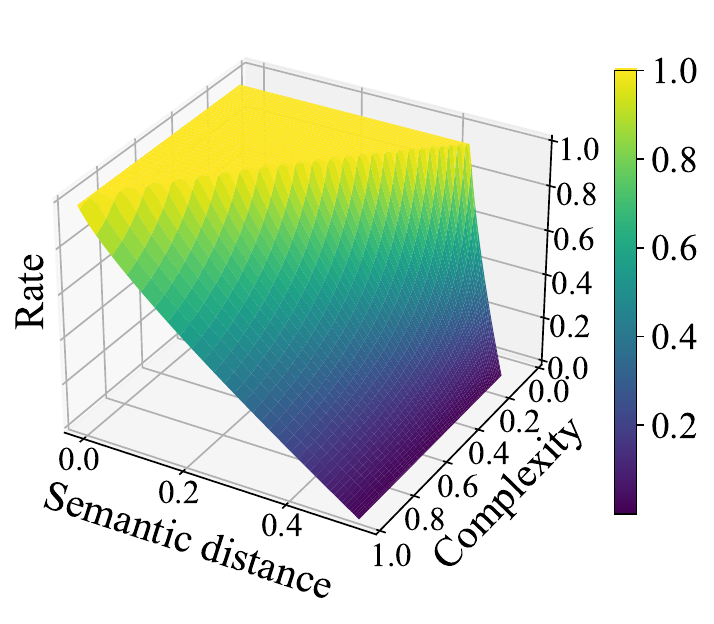}
		\caption{3D Surf plot of the RDC tradeoff of binary semantic sources.}
		\label{fig:bin_3d} 
	\end{figure}

    As illustrated in Fig.~\ref{fig:bin_1b}, we observe that when the complexity satisfies $I(X;U)<1$, achieving zero-distortion recovery of the semantic source becomes theoretically impossible.
    Furthermore, as $I(X;U)$ decreases, the minimum achievable distortion increases.
    This corresponds to the fact that reduced $I(X;U)$ values correspond to higher compression levels in the output representation $U$,
    which inevitably introduces higher uncertainty through the coding process.
    Consequently, when the decoder receives this compressed and uncertain representation, it becomes fundamentally incapable of perfectly reconstructing the original semantic source.
    This observation aligns with the intuitive understanding that excessive compression sacrifices fidelity, thereby increasing the distortion in the reconstructed signal.
    This corresponds to the fact that to achieve a promising distortion level, one should allocate sufficient model complexity at the user end.
    To obtain a more intuitive understanding of this three-way tradeoff, we also present a 3D contour plot of the mutual information terms of the proposed coding problem in Fig.~\ref{fig:bin_3d}.
    The behavior of the binary RDC tradeoff aligns with that of the Gaussian case.
    


    \section{A Variational Approach to RDC Optimization} 
    
    The proposed RDC optimization problem is inherently challenging due to the analytical intractability of the mutual information terms. To address this, we develop a variational framework, called the \emph{Variational Rate-Distortion-Complexity (VRDC)} method, to derive computationally efficient approximations of the optimal solutions. This approach is particularly suited for data-driven scenarios where the underlying source distribution is unknown and must be inferred from empirical samples. 

    Let us first introduce the approach of optimizing the RDC problem for the classification task under the generalization constraint of the IB problem. 
    The optimization problem proposed in (\ref{eq:problem_rdc}) under the generalization constraint in (\ref{eq:distortion_IB_temp}) can be rewritten in the following Lagrange form 
    \begin{equation}
    \label{eq:exp_lb_1}
    	\mathcal L_\text{cls} := I(U;\hat S)+\lambda_c I(X;U)-\lambda_d I(\hat S;S),
    \end{equation}
    where $\lambda_c$ and $\lambda_d$ are Lagrange multipliers. The value of $\lambda_c$ controls the degree of compression, i.e., higher $\lambda_c$ implies higher compression rate;
    And $\lambda_d$ controls the quality of the recovery $\hat S$ of the semantic source $\hat S$: higher $\lambda_d$ implies lower distortion.
    We then establish the variational bound of (\ref{eq:exp_lb_1}) by investigating the bounds associated with each mutual information term in (\ref{eq:exp_lb_1}).
    Firstly, based on \cite{kussul2017deep}, the complexity mutual information term is upper-bounded by
    \begin{equation}
    \label{eq:exp_mi_complexity}
        I(X;U)\leq \mathbb {E}_{p_{XU}}(\log p(u|x)-\log t(u)),
    \end{equation}
    where $t(u)$ is a given distribution
    as a variational approximation of $p(u)$.
    Then for given $q(s|\hat s)$, we have the following bound for the distortion term $I(\hat S;S)$ of (\ref{eq:exp_lb_1}):
    \begin{equation}
    \label{eq:exp_mi_distortion}
        I(\hat S; S)\geq 
         \mathbb {E}_{p_{SXU\hat S}} \log q(s|\hat s)+H(S)
    \end{equation}
    Similarly, can also derive the following bound:
    \begin{equation}
        \label{eq:exp_mi_rate}
        I(U;\hat S)\leq \mathbb {E}_{p_{SXU\hat S}}(\log p(\hat s|u)-\log r(\hat s))
    \end{equation}
    where $r(\hat s)$ is a given distribution as a variational approximation of the distribution recovered signal $p(\hat s)$.
    Combining (\ref{eq:exp_mi_complexity}), (\ref{eq:exp_mi_distortion}) and (\ref{eq:exp_mi_rate}), the loss function in (\ref{eq:exp_lb_1}) is upper-bounded by
    \begin{align}
        \mathcal L_\text{cls}\leq  
        \tilde{\mathcal L}_\text{cls}:= &
        \mathbb {E}_{p_{U\hat S}}\log \frac{p(\hat s|u)}{r(\hat s)}
        +\lambda_c \mathbb {E}_{p_{XU}}\log \frac{p(u|x)}{t(u)}\nonumber\\
        &-\lambda_d \mathbb {E}_{p_{S\hat S}} \log q(s|\hat s).
        \label{eq:loss_cls_final},
    \end{align}
    where the constant entropy term $H(S)$ in \eqref{eq:exp_mi_distortion} is omitted as it does not affect the optimization.
    Hence \eqref{eq:loss_cls_final} is the final loss function for the classification task.

    Similarly,
    when the rate is fixed,
    the loss function for the generation task is formulated as
    \begin{equation}
    \label{eq:loss_gen_original}
        \mathcal{L}_\text{gen}:= \mathbb{E}[d(S,\hat S)] + \lambda_p d_W(p_S,p_{\hat S}) + \lambda_c I(X;U).
    \end{equation}
    Here we use the Wasserstein distance $d_W$ to measure the perceptual quality, and $\lambda_p,\lambda_c$ are the tuning parameters.
    Based on (\ref{eq:exp_mi_complexity}), the loss function in (\ref{eq:loss_gen_original}) is lower bounded by
    \begin{align}
         \tilde{\mathcal L}_\text{gen}\hspace{-2pt}:=\hspace{-2pt}
        \mathbb{E}[d(S,\hat S)]\hspace{-2pt} +\hspace{-2pt} \lambda_p d_W(p_S,p_{\hat S})\hspace{-2pt} + \hspace{-2pt}\lambda_c \mathbb {E}_{p_{XU}}\hspace{-2pt}\log\hspace{-2pt} \frac{p(u|x)}{t(u)}
        \label{eq:loss_gen_lb}
    \end{align}
    where $t(u)$ is a given probability distribution.
    By adopting a WGAN-based architecture in \cite{blau2019rethinking}, the loss function is reformulated as
    \begin{align}
        \tilde{\mathcal L}_\text{gen}=&
         \mathbb{E}[d(S,\hat S)] + \lambda_c \mathbb{E}_{p_{XU}} \log\frac{p(u|x)}{t(u)} \nonumber\\
        &+ \lambda_p \max_{h\in \mathcal H} ( \mathbb{E}[h(S)]-\mathbb{E}[h(\hat S)] )). 
    \end{align}
    By replacing all expectations by sample means,
    the loss function for generation task is approximated by
    \begin{align}
        \tilde{\mathcal L}_\text{gen}\approx&
         \frac1N\sum_{i=1}^N \big(
         [d(s_i,\hat s_i)] + \lambda_c d_\mathrm{KL}(p_{U|X=x_i}\Vert t_U)\nonumber\\
        &+ \lambda_p \max_{h\in \mathcal H} \left( h(s_i)-h(\hat s_i)\right)\big)  
        \label{eq:loss_gen_final},
    \end{align} 
    where $s_i,\hat s_i, x_i$ are the $i$-th sample of the data source, reconstruction and observation respectively.
    Using Monte-Carlo sampling to approximate the above expectations and assuming the variational distributions $t(u), r(u)$ follow the Gaussian distributions, the loss functions in \eqref{eq:loss_cls_final}, \eqref{eq:loss_gen_final}  provide tractable objectives for the associated tasks \cite{kingma2013auto,alemi2017deep}.
    Below, we explicitly demonstrate the derivation of the estimated complexity term in \eqref{eq:exp_mi_complexity}.
    Formally, assume the conditional distribution $p(u|x)$ follows a multivariate Gaussian distribution $p(u|x)=\mathcal N(u|\mu(x),\text{diag}(\sigma (x)))$ for any given $x$,
    where the vector pairs $\mu(x),\sigma(x)$ is
    determined by the output of the DNN-based encoder with input $x$.
    We can use the reparameterization trick to write $p(u|x)du=p(\epsilon)d\epsilon$
    where $\epsilon$ is a Gaussian random variables \cite{kingma2013auto}. Assume that the variational posterior $t(u)$ follows a Gaussian distribution $t_U\sim\mathcal N(0,\mathbf{I})$ \cite{alemi2017deep}.
    Under these assumptions, for given observation $x$, the complexity term \eqref{eq:exp_mi_complexity} can be estimated through:
    \begin{align}
        I(X;U)\leq &\mathbb {E}_{p_{XU}}(\log p(u|x)-\log t(u))=d_\mathrm{KL}(p_{U|X}\Vert t_U)\nonumber\\
        =&\frac12\sum_{i=1}^K\left( \sigma_i(x)^2+\mu_i(x)^2-1-2\log\sigma_i(x)
        \right),
        \label{eq:exp_tractable_complexity}
    \end{align}
    where $K$ is the dimensionality of $U$, $\sigma_i(x)$ and $\mu_i(x)$ are the $i$-th components of $\sigma(x)$ and $\mu(x)$, respectively.
    Thus, the upper bound in \eqref{eq:exp_tractable_complexity} provides a tractable estimation of the mutual information measured complexity $I(X;U)$.

    \section{Experimental Results}
    \label{sec:exp}
    

    \subsection{Experimental Results on Different Tasks}
    We conduct extensive experiments using the VRDC method on two image datasets, to accomplish classification and generation tasks.
    The experimental results are derived by alternating the controllable parameters and optimizing the associated loss functions in 
    \eqref{eq:loss_cls_final} and \eqref{eq:loss_gen_final}  of these tasks respectively.
    The model complexity is obtained by applying stochastic gradient descent (SGD) to the VRDC objective in \eqref{eq:exp_tractable_complexity}

    \noindent\textbf{Classification Task:}
    \begin{figure} [t]
		\centering
		\subfloat[]{
			\includegraphics[width=0.48\linewidth]{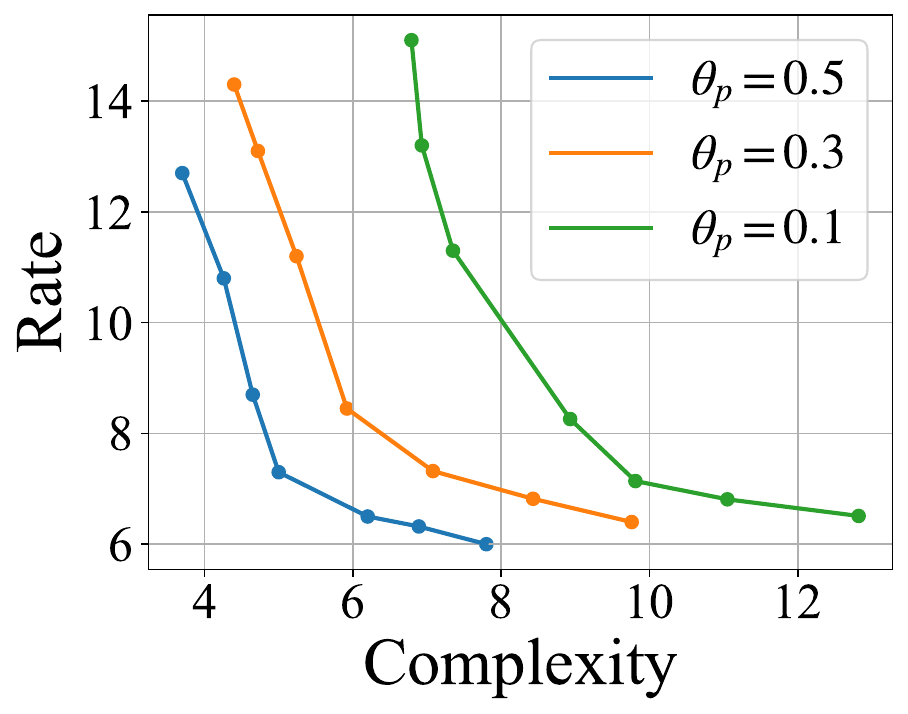}
			\label{fig:exp_1a}	
		}
		\subfloat[]{
			\includegraphics[width=0.46\linewidth]{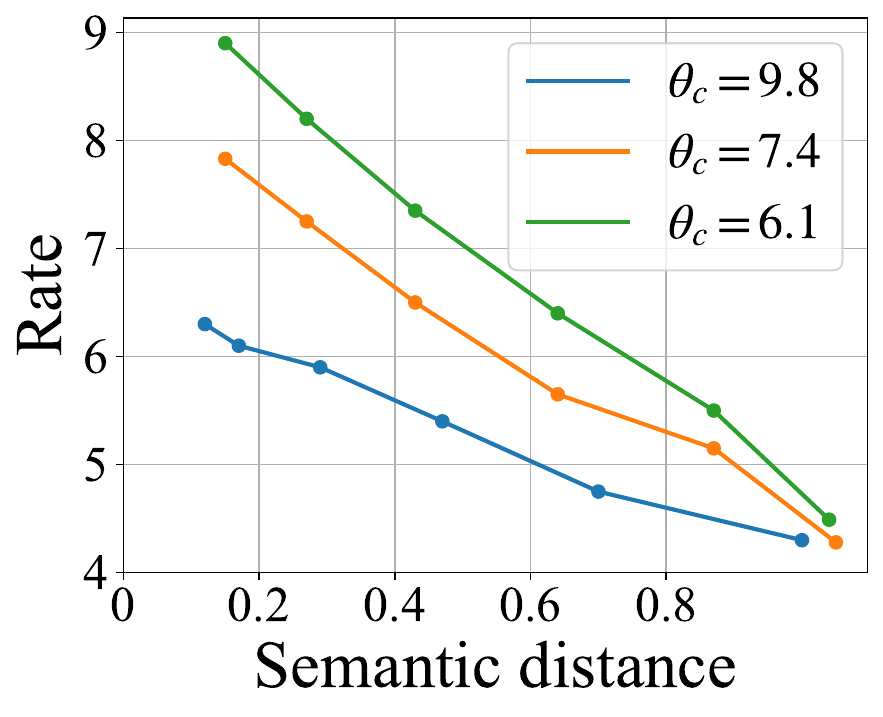} 
			\label{fig:exp_1b}
            }
		\caption{RDC curves of the image classification task:
        (a) Rate-complexity tradeoff curves;
        (b) Rate-semantic distance tradeoff curves.}
		\label{figs:exp_1} 
	\end{figure}
    \begin{figure} [t]
		\centering
		\includegraphics[scale=0.50]{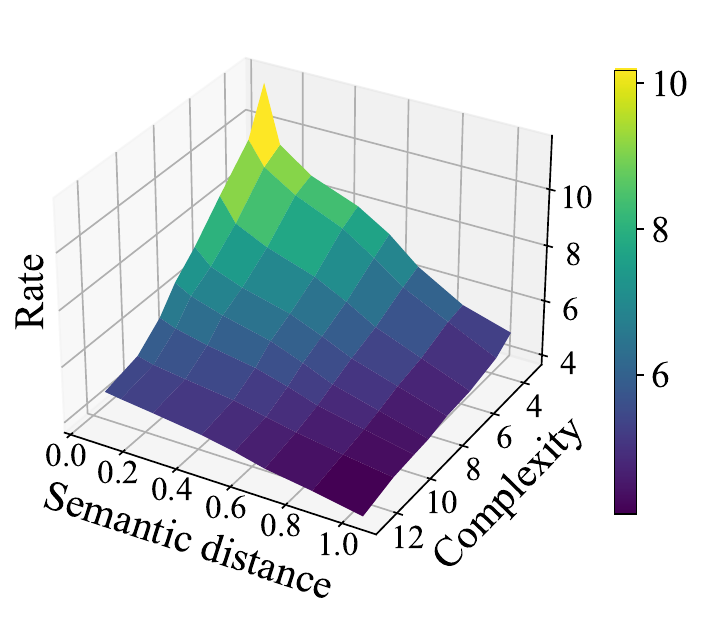}
		\caption{3D surf plot of the simulation RDC tradeoff.}
		\label{fig:exp_3d} 
	\end{figure}
    We demonstrate the classification performance of the VRDC framework, along with that of the traditional DNN-based coding schemes,
    to validate the effectiveness of the RDC methodology.
    We first illustrate the simulative rate-complexity curves under some fixed distortion level and the rate-distortion curves under some fixed complexity degree in Fig.~\ref{fig:exp_1a} and \ref{fig:exp_1b} respectively.
    We can once again observe a three-way tradeoff among the transmission rate, distortion and model complexity for practical image semantic sources.
    Similarly, as demonstrated in Fig.~\ref{fig:exp_1b},
    we can see that under a given complexity degree, an increase in the transmission rate results in lower distortion.
    This again justifies the fact that the user can achieve the same recovery quality with substantially reduced communication resources, at the cost of the increased model complexity at the user end.
    Therefore, to accomplish the intelligent tasks, it is possible to offload the  communication resource at the cost of increasing the model complexity.
    In Fig.~\ref{fig:exp_3d},
    we also present the 3D contour plot showing the RDC tradeoff, which aligns with the theoretical observations derived from the analysis of Gaussian and binary RDC functions.
    
	\begin{figure} [t]
		\centering
		\subfloat[]{
			\includegraphics[width=0.48\linewidth]{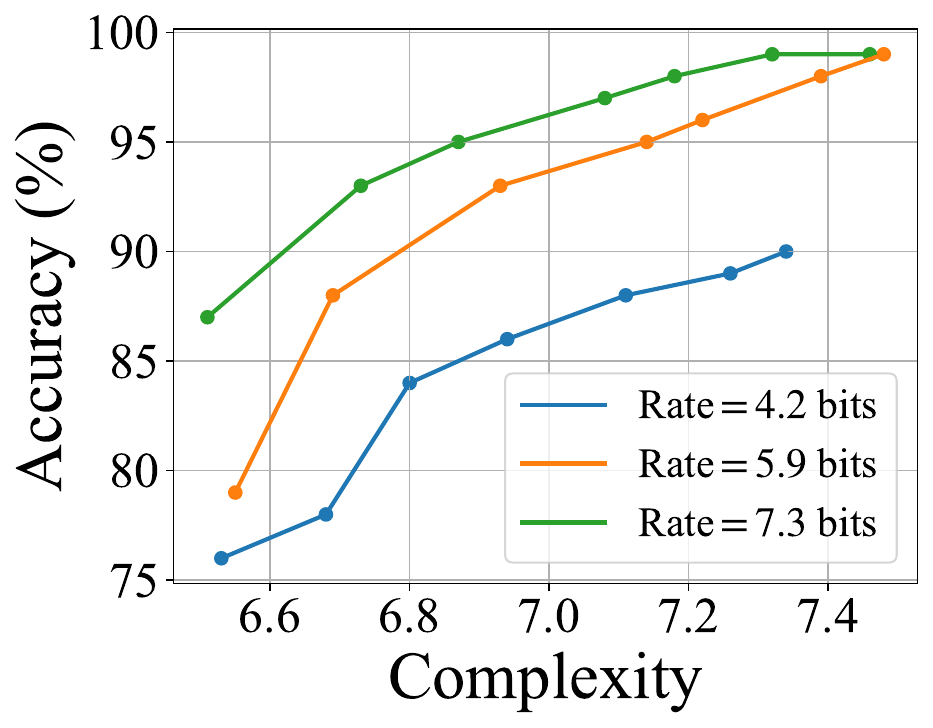}
		}
		\subfloat[]{
			\includegraphics[width=0.48\linewidth]{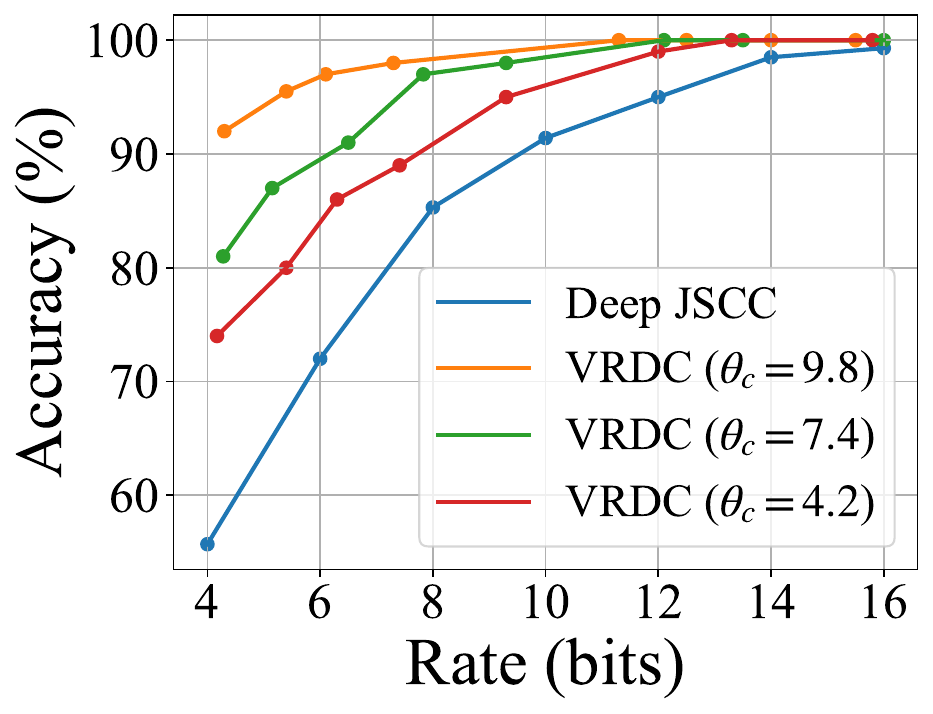} 
            }
		\caption{ (a)
        The classification accuracy of the proposed VRDC method under varying complexity;
        (b)
        The comparison between the classification accuracy of the DeepJSCC \cite{bourtsoulatze2019deep} approach and that of the VRDC method.}
		\label{figs:exp_2} 
	\end{figure}
    
    We also compare the classification accuracy of the proposed VRDC method with the traditional coding scheme that focuses implicitly on the accurate recovery of the image sources.
    More specifically, we consider the DeepJSCC method \cite{bourtsoulatze2019deep} as the benchmark coding scheme of our proposed VRDC method.
    In order to quantify the amount of the transmission rate of DeepJSCC, we adopt a uniform quantizer with $L$ levels to process the output of the DeepJSCC encoder,
    where the soft gradient estimator is used to backpropogate through the quantizer \cite{mentzer2018conditional}.
    
    We compare the classification accuracy of DeepJSCC with the proposed VRDC methods under different transmission rates.
    It can be observed that the proposed method achieves significantly superior performance, particularly in scenarios characterized by low transmission rates and high model complexity of VRDC.
    This is because the traditional DeepJSCC method only focuses on accurately delivering the original image signals, inadvertently transmitting information that is irrelevant to the classification task.
    While the proposed VRDC method optimizes the task-relevant distortion, as measured by KL divergence, thereby generating a more compressed and informative representation for the classification task.

    \begin{figure} [t]
		\centering
		\subfloat[]{
			\includegraphics[width=0.48\linewidth]{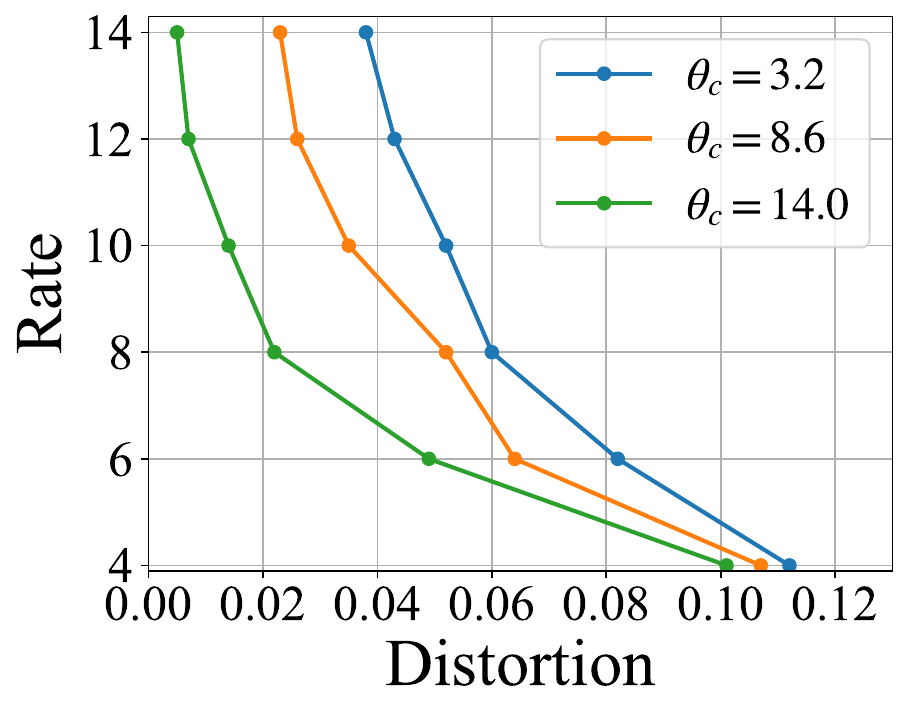}
		}
		\subfloat[]{
			\includegraphics[width=0.48\linewidth]{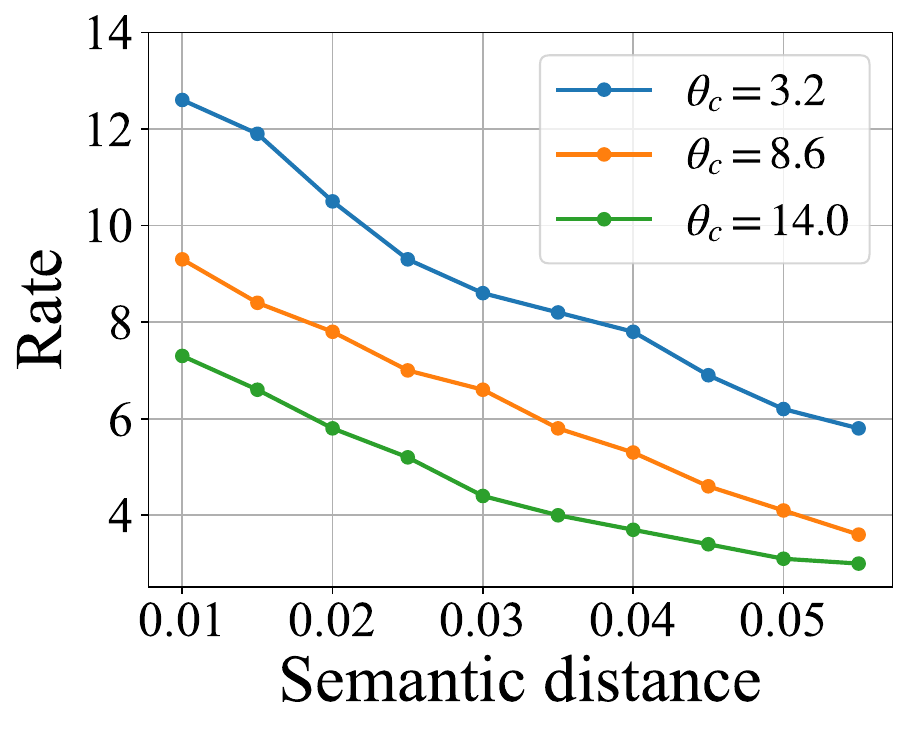} 
            }
		\caption{
        Simulation RDC curves for image generation task under some fixed complexity:
        (a) Rate-distortion tradeoff; 
        (b) Rate-perception tradeoff.}
		\label{figs:exp_gen} 
	\end{figure}

    \noindent\textbf{Image Generation Task}:
    We then consider the image generation task on MNIST dataset, using the VRDC approach
    through optimizing the loss function in (\ref{eq:loss_gen_final}).
    In Figs.~\ref{figs:exp_gen}(a) and \ref{figs:exp_gen}(b),
    we illustrate the curve plots of the experimental RDC functions under various model complexity values.
    We  observe from Fig.~\ref{figs:exp_gen}(a) that, under the same distortion level, increasing the model complexity leads to a decrease in required transmission rate.
    Similarly, Fig.~\ref{figs:exp_gen}(b) also reveals the fact that for a fixed rate, an increase in model complexity can also increase the perceptual quality.
    This again justifies the rationale behind the RDC tradeoff that one can reduce the required transmission rate by increasing the complexity of the DNN-based encoders at the user end, without either additional communication resources or sacrificing the communication fidelity.
    By incorporating the well-known RDP tradeoff \cite{blau2019rethinking},
    the proposed RDC tradeoff is in fact a four-way tradeoff among transmission rate, bit-wise distortion, distribution-wise perception and model complexity.
    This extension represents a significant advancement in semantic rate-distortion theory by formally incorporating model complexity as a fundamental dimension of the optimization space. The RDC framework thus provides a more comprehensive characterization of modern communication systems, where model complexity play a crucial role in balancing rate, distortion, and perceptual quality constraints.
    
    \begin{figure} [t]
		\centering
		\includegraphics[scale=0.31]{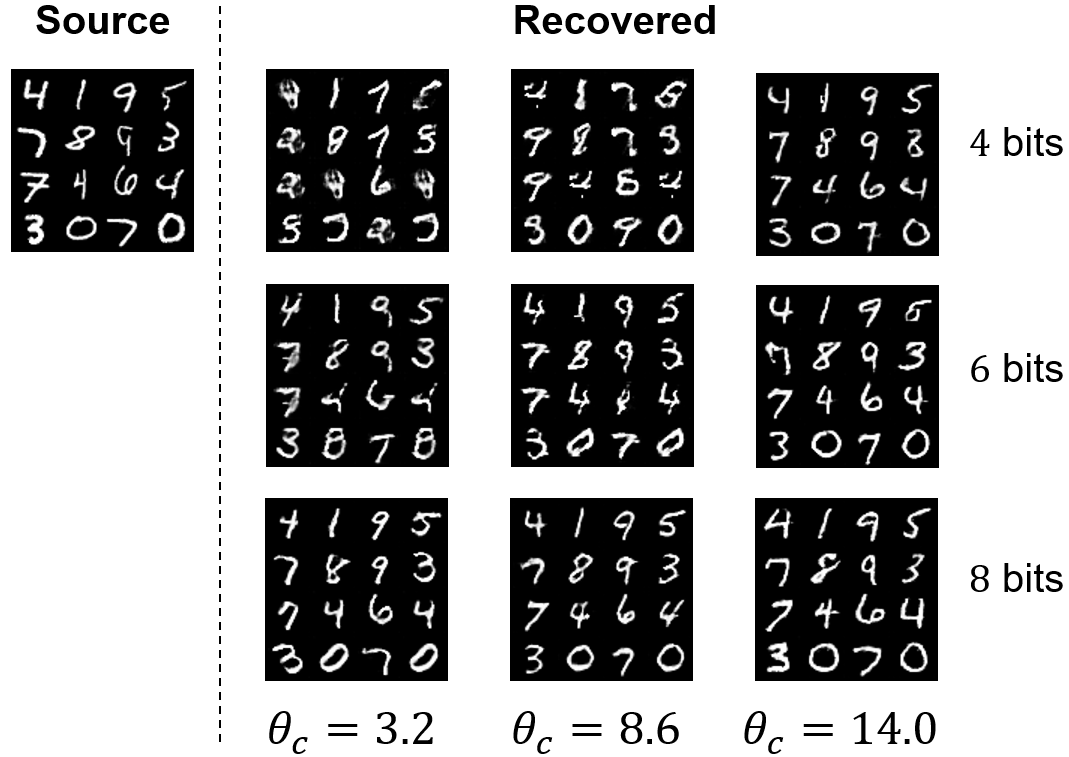}
		\caption{Visual results of the reconstructed image samples across different transmission rate and complexity.}
		\label{fig:exp_MNIST} 
	\end{figure}
    To provide an intuitive demonstration of the RDC tradeoff in generation tasks, Fig.~\ref{fig:exp_MNIST} presents comparative visual results under varying complexity levels. Each row in the figure corresponds to a fixed transmission rate, enabling direct visual assessment of how increased model complexity affects generation quality while maintaining identical rate constraints. 
    We can clearly observe that the perceptual quality of the recovered sample is higher than that generated by the models with lower complexity.
    And an increase in transmission rate can also increase the perceptual quality, when providing the same model complexity.
    The above observations again justify the tradeoff relationship among rate, perceptual quality and complexity.

    \subsection{Relationship between Model Complexity and Computational Complexity}

    \begin{table}[ht]
    \centering
    \caption{Results of computational complexity for classification task on MNIST (top) and CIFAR10 (bottom)}
    \label{tab:classification_flops_mutual_info}
    \begin{tabular}{c|c|c|c}
    \hline
    \textbf{Dim} & \textbf{FLOPs (M)} & \textbf{Complexity}  & \textbf{Acc ($\%$)} \\
    \hline
    8  & 0.489 & 3.52 & 85.5 \\
    16 & 0.493 & 5.87 & 92.2 \\
    32 & 0.501 & 7.94 & 95.3 \\
    64 & 0.517 & 9.23 & 99.0 \\
    128& 0.550 & 9.78 & 99.6 \\
    256& 0.616 & 10.05 & 99.8 \\
    512& 0.747 & 10.21 & 99.5 \\
    \hline
    \end{tabular}

    \vspace{1em}
    
    \begin{tabular}{c|c|c|c}
    \hline
    \textbf{Dim} & \textbf{FLOPs (M)} & \textbf{Complexity}  & \textbf{Acc ($\%$)} \\
    \hline
    32   & 3.263 & 4.23 & 84.7 \\
    64   & 3.312 & 6.15 & 88.3 \\
    128  & 3.410 & 7.82 & 91.2 \\
    256  & 3.607 & 8.54 & 92.6 \\
    512  & 4.028 & 8.93 & 93.4 \\
    1024 & 4.787 & 9.14 & 93.7 \\
    2048 & 6.359 & 9.26 & 93.8 \\
    \hline
    \end{tabular}
    \end{table}

     \begin{table}[ht]
    \centering
    \caption{Results of computational complexity for generation task on MNIST (top) and CIFAR10 (bottom)}
    \label{tab:udim-flops-info}
    \begin{tabular}{c|c|c|c}
    \hline
    \textbf{Dim} & \textbf{FLOPs (M)} & \textbf{Complexity} & \textbf{IS} \\
    \hline
    8  & 0.938 & 4.23 & 1.96 \\
    16 & 0.942 & 7.65 & 2.08\\
    32 & 0.950 & 10.82 & 2.17 \\
    64 & 0.967 & 12.37 & 2.24 \\
    128& 0.999 & 13.56 & 2.29 \\
    256& 1.065 & 14.21 & 2.33 \\
    512& 1.196 & 14.73 & 2.35 \\
    \hline
    \end{tabular}

    \vspace{1em}

    \begin{tabular}{c|c|c|c}
    \hline
    \textbf{Dim} & \textbf{FLOPs (M)} & \textbf{Complexity} & \textbf{IS} \\
    \hline
    32   & 10.683 & 18.76 & 1.18 \\
    64   & 10.715 & 27.33 & 1.24 \\
    128  & 10.813 & 36.83 & 1.33 \\
    256  & 11.010 & 43.50 & 1.42 \\
    512  & 11.403 & 47.21 & 1.49 \\
    1024 & 12.190 & 49.48 & 1.54 \\
    2048 & 13.763 & 50.62 & 1.55 \\
    \hline
    \end{tabular}
    \end{table}

    To further validate that our complexity measure $I(X;U)$ effectively captures the computational cost of the encoder and provides practical guidance for system design, we investigate the relationship between $I(X;U)$, computational complexity, and task performance under varying dimensions of the encoder output $U$. In our experiments, the model complexity $I(X;U)$ is computed using the tractable upper bound derived in \eqref{eq:exp_tractable_complexity}.
     The task performance is measured by classification accuracy and Inception Scores (IS) \cite{salimans2016improved} for classification and generation respectively.
     The computational complexity is measured by recording the number of
     the FLOPs of VRDC frameworks for these two tasks.
     We conduct extensive experiments on both MNIST and CIFAR10 datasets.
     The corresponding experimental results for classification and generation tasks are summarized in Table~\ref{tab:classification_flops_mutual_info} and Table~\ref{tab:udim-flops-info}, respectively.

    We can directly observe that, compared to MNIST dataset, the VRDC framework requires higher model complexity and computational complexity to achieve a promising task performance, since CIFAR10's higher visual complexity demands larger representation capacity and more FLOPs to capture meaningful features. Notably, the relationship between FLOPs and $I(X;U)$ is dataset-dependent.
    As shown in Table \ref{tab:udim-flops-info}, for MNIST, 1.196M FLOPs yields a mutual information of 14.73, while for CIFAR10, 10.68M FLOPs yields only 18.7. This disproportionate scaling is expected because CIFAR10 images contain significantly more complex visual structures that require substantially more computation to extract comparable amounts of task-relevant information.
    
    More importantly, we observe that, when the dimension of $U$ is relatively low, both $I(X;U)$ and FLOPs increase nearly linearly with the dimension. This indicates that $I(X;U)$ serves as a reliable proxy for computational complexity in this regime, confirming that our complexity term can effectively guide the design of practical communication systems where computational resources are constrained.
    We also note that, beyond a certain threshold dimension (e.g., $\text{Dim}=128$ for classification on MNIST; $\text{Dim = 512}$ for generation on CIFAR10), further increasing the dimension of $U$ continues to raise FLOPs, yet $I(X;U)$ saturates and grows slowly. This saturation occurs because $U$ has already captured the maximum amount of relevant information about the semantic source $S$ that can be extracted under the given model structure. Beyond this point, increasing the representational capacity adds redundancy without enhancing the informativeness of the representation, implying that additional computational overhead is unnecessary.
    
    Moreover, beyond the same threshold, task performance closely follows the trend of $I(X;U)$ rather than that of FLOPs. This demonstrates a distinct advantage of $I(X;U)$ over raw computational metrics like FLOPs: $I(X;U)$ not only reflects computational cost but also quantifies the effective complexity, namely the amount of information actually utilized for the task. 
    In other words, $I(X;U)$ captures the point of diminishing returns where further increases in model size cease to improve performance, thereby providing a more principled criterion for balancing complexity, communication rate, and task fidelity in semantic communication systems.

    \begin{figure} [t]
		\centering
		\subfloat[]{
			\includegraphics[width=0.48\linewidth]{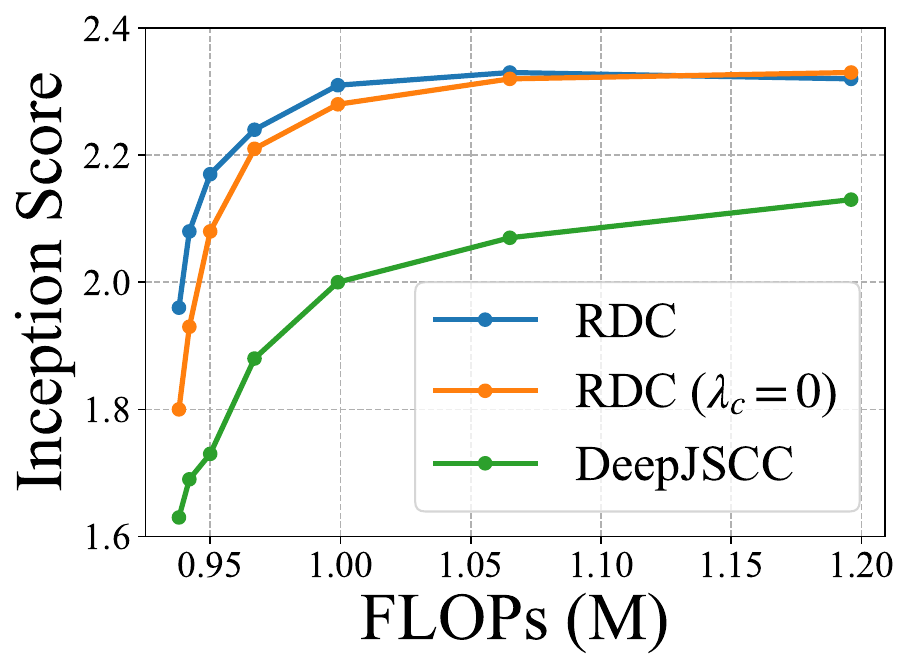}
		}
		\subfloat[]{
			\includegraphics[width=0.48\linewidth]{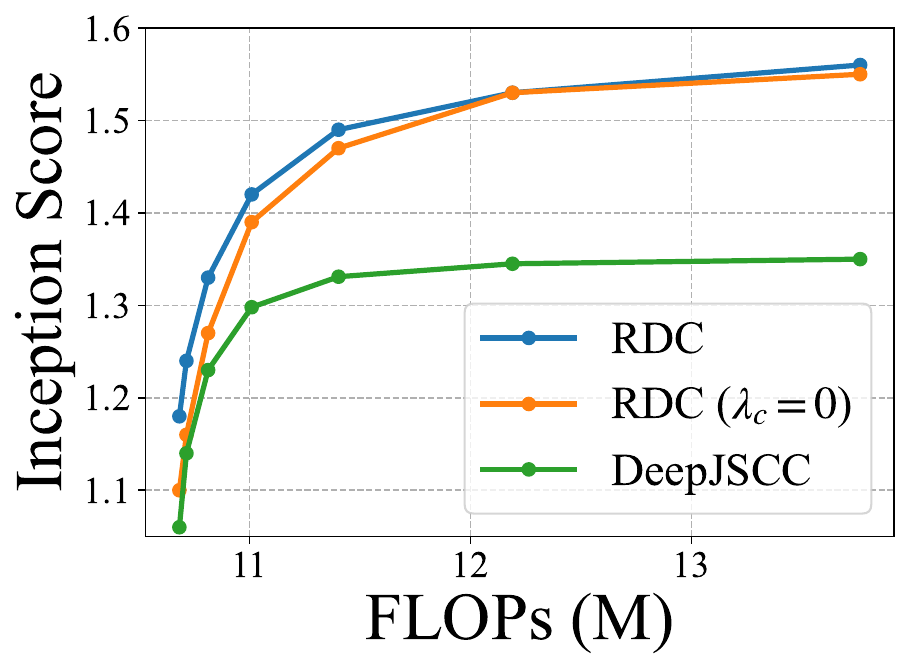} 
            }
		\caption{Curve plots of the relationship between FLOPs and IS across three different models on (a) MNIST and (b) CIFAR10.}
		\label{fig:exp_flops}
	\end{figure}

    To further demonstrate the  benefits of our RDC framework in practical resource-constrained scenarios,
    we also conduct experiments for generation task on both MNIST and CIFAR10 datasets across different models.
    Specifically, we compare the IS of our full RDC model against two baselines, DeepJSCC and  a variant of our framework without the complexity constraint, implemented by setting $\lambda_c = 0$ in the loss function \eqref{eq:loss_gen_final} under the same FLOPs.
    
    As shown in Figs.~\ref{fig:exp_flops} (a) and (b), we note that both RDC and its variant without complexity constraint substantially outperform DeepJSCC across all FLOPs levels on both MNIST and CIFAR10 datasets. 
    This performance gap underscores the importance of incorporating divergence-based semantic distance constraints for generation tasks, as DeepJSCC, which focuses solely on optimizing bit-wise fidelity, fails to preserve perceptual quality effectively. 
    We also observe that, in the low-FLOPs regime (below 0.95M FLOPs on MNIST and 11.01M FLOPs on CIFAR10), the full RDC model achieves noticeably higher IS than the RDP variant. 
    Specifically, on MNIST dataset, when the computational complexity is set at 0.938M and 0.942M FLOPs, the full RDC model achieves roughly 8.89\% and 7.78\% higher generation performance, respectively, compared to the RDC model trained without a complexity constraint.
    On CIFAR10, the improvements of full RDC over RDC without complexity constraint is 7.27\% at 10.683M FLOPs, and 6.92\% at 10.715M FLOPs.
    This advantage stems from the explicit complexity constraint $I(X;U)$, which encourages the encoder to learn a more information-efficient representation, thereby making better use of limited computational resources. 
    
    Moreover, as FLOPs increase beyond a certain threshold (1.065M FLOPs for MNIST and 12.190M FLOPs for CIFAR10), the performance of the two RDC-based models converges, with both approaching a similar saturation level. This behavior aligns with the trends observed in Table~\ref{tab:udim-flops-info}, where the mutual information $I(X;U)$ levels off once the representation capacity becomes sufficient to capture the essential semantic information. Together, these findings validate that the complexity term $I(X;U)$ not only provides a theoretical tradeoff dimension but also guides the design of more efficient encoders in practice, particularly under strict computational constraints typical of IoT and edge devices.
    To summarize, training semantic coders with a model complexity constraint facilitates more effective capture of source information, leading to improved task performance under the same computational budget when compared to models trained without such a constraint.

    \section{Conclusion}
    \label{sec:conclusion}

    In this paper, we have investigated the fundamental tradeoff among transmission rate, distortion and model complexity. We have considered a semantic communication system where the encoder can only access the semantic source through indirect observations, while both the encoder and the decoder can utilize side information. We have derived the closed-form expressions of the RDC functions for both Gaussian and binary semantic sources, which reveal not only the corresponding three-way tradeoff, but also the fact that communication resources can be offloaded by increasing model complexity of the DNN-based encoders at the user end. Experimental results on practical image data sources verify the theoretical tradeoff and further demonstrate that the proposed complexity measure effectively bridges information-theoretic analysis and practical computational costs, providing a principled guideline for balancing communication and computation resources in semantic communication systems.


    \makeatletter
	\newcommand{\myappendix}{%
		\setcounter{section}{0}
		\renewcommand{\thesection}{Appendix~\Alph{section}}
	}
	\makeatother
	\myappendix

    \section{
		Proof of Theorem \ref{th:gaus}}
	\label{app:proof_gaus}

    Since the mutual information between two Gaussian distributions is invariant to their mean parameters, without loss of generality, we may assume zero means for both  $\hat{S}$ and $S$. For mathematical tractability, we thus model $\hat{S}$ as a zero-mean Gaussian random variable, i.e., $\hat{S} \sim \mathcal{N}(0, \sigma^2)$.
    When the equation of the constraint on complexity holds, the parameter $\rho$ is determined by
    \begin{equation}
        \cc=I(X;U)=\frac12\log\frac1{1-\rho^2},
    \end{equation}
    which leads to $\rho=\sqrt{1-2^{-2\cc}}$. 
    Since the Gaussian RDC problem is in fact an indirect source coding problem,
    we then turn to convert the RDC problem into an equivalent direct source coding form.
    By definitions \eqref{eq:gaus_assumption1} and \eqref{eq:gaus_assumption2}
    we have
    $\mathrm{Cov}(S,X)=\gamma$, and
    \begin{equation}
        S = \gamma\rho U + \sqrt{1-\gamma^2}Z_1 + \gamma\sqrt{1-\rho^2}Z_2 ,
    \end{equation}
    hence $\mathrm{Cov}(S,U)=\gamma\rho$.
    Define $\kappa:={\rm Cov}(U,\hat S)$.
    From the Markov chain $S \to X \to U \to \hat{S}$,
    \begin{align}
       & \mathrm{Cov}(S,\hat{S}) = \mathbb{E}[\mathbb{E}[S|U]\mathbb{E}[\hat{S}|U]]
    = \mathbb{E}[\gamma\rho U\kappa U] = \gamma\rho\kappa,\\
    &\mathrm{Cov}(X,\hat{S}) = \mathbb{E}[\mathbb{E}[X|U]\mathbb{E}[\hat{S}|U]]
    = \mathbb{E}[\rho U\kappa U] = \rho\kappa.
    \end{align}
    The covariance matrix is therefore
    \begin{equation}
        {\rm Cov}(S,X,U,\hat S) =
    \begin{bmatrix}
    1 & \gamma & \gamma\rho & \gamma\rho\kappa \\
    \gamma & 1 & \rho & \rho\kappa \\
    \gamma\rho & \rho & 1 & \kappa \\
    \gamma\rho\kappa & \rho\kappa & \kappa & \sigma^2
    \end{bmatrix}.
    \end{equation}
    Define the distortions $d_S = \mathbb{E}[(S-\hat{S})^2]$
    and $d_U = \mathbb{E}[(U-\hat{S})^2]$.
    We then have
    \begin{equation}
    \label{eq:app_ds1}
         d_S = \mathbb{E}[S^2] + \mathbb{E}[\hat{S}^2] - 2\mathbb{E}[S\hat{S}]
         = 1 + \sigma^2 - 2\gamma\rho\kappa
    \end{equation}
    Solving for $\kappa$ gives
    \begin{equation}
    \label{eq:app_kappa}
        \kappa = \frac{1 + \sigma^2 - d_S}{2\gamma\rho}.
    \end{equation}
    Next, compute $d_U$:
    \begin{equation}
    \label{eq:app_du1}
        d_U = \mathbb{\mathbb{E}}[U^2] + \mathbb{E}[\hat{S}^2] - 2\mathbb{E}[U\hat{S}]
         = 1 + \sigma^2 - 2\kappa.
    \end{equation}
   Introducing \eqref{eq:app_ds1} into \eqref{eq:app_du1}  yields
   \begin{align}
       d_U =& 1 + \sigma^2 - 2 \frac{1 + \sigma^2 - d_S}{2\gamma\rho}\nonumber\\
         = &\frac{d_S}{\gamma\rho} + (1 + \sigma^2)(1 - \frac{1}{\gamma\rho}).
         \label{eq:app_du2}
   \end{align}
   Based on \eqref{eq:app_du2}, the constraint $d_S\leq \cd$ is equivalent to
   \begin{equation}
   \label{eq:app_du3}
       d_U\leq \frac{\cd}{\gamma\rho} -  \frac{1-\gamma\rho}{\gamma\rho}(1 + \sigma^2):=\theta_u.
   \end{equation}
    Combining \eqref{eq:app_du3} and the fact that $p_S=p_U$,
    the original Gaussian RDC problem is equivalent to 
    \begin{align}
    \label{eq:problem_gaus}
        &\min I(U;\hat S),\\
        \text{ s.t. } \mathbb{E}[d(U,&\hat S)]\leq \cu,\; 
        d_W(p_U,p_{\hat S})\leq \cp. \nonumber
    \end{align}
    When the perception constraint is inactive, i.e.,
    \begin{equation}
    \label{eq:temp_2}
        d_{W}(p_U,p_{\hat S})=(1-\sigma)^2< \cp.
    \end{equation}
    In this case, the RDC function is equivalent to the traditional Shannon rate-distortion function, written as 
    \begin{equation}
        R^{\mathcal G}(\cc,\cd,\cp)=\frac12\log\frac{1}{\cu},
    \end{equation}
    with $\sigma=1-\cu$.
    Then we have
    Combining (\ref{eq:temp_2})
    \begin{align}
        &\sigma^2_0 < \sigma^2\leq 1\Leftrightarrow
        0\leq \dx < 1-\sigma_0^2\nonumber\\
       \Leftrightarrow& (1+\sigma_0^2)(1-\gamma\rho)\leq \cd < 1+\sigma_0^2(1-2\gamma\rho),
    \end{align}
    where $\sigma_0=1-\sqrt{\cp}$.
    When the perception constraint is active, i.e., 
    $d_{W}(p_U,p_{\hat S})\geq (1-\sigma)^2= \cp,$
    we have $\sigma=\sigma_0=1-\sqrt{\cp}.$
    Taking $d_S=\cd$ in \eqref{eq:app_kappa}, 
    the rate term  is thus given by
    \begin{equation}
        I(U;\hat S)=\frac12\log \frac{\sigma^2}{\sigma^2-\kappa^2}
        =\frac12\log\frac1{1-\left(\frac{
        1+\sigma^2-\cd}{2\gamma\rho\sigma}\right)^2}
    \end{equation}
    which concludes the proof of Theorem \ref{th:gaus}.

    \section{
		Proof of Theorem \ref{th:bin}}
	\label{app:proof_th1}
    \emph{Proof}:
    Define the  complexity-rate-distortion problem
    \begin{align}
         &\min_{p_{UX\hat S}} d_\mathrm{KL}(p_{S|X} \Vert p_{S|\hat S})\nonumber\\
         \text{s.t.}\;&  I(X;\hat S)\leq R, I(X;U)\leq \cc. 
    \end{align}
    The optimal coding scheme, which induces the optimal conditional probabilities $p_{U|X},p_{\hat S|U}$ of the distortion-rate-complexity problem, is identical to that of the RDC problem.
    Thus, we turn to derive the solution to the distortion-rate-complexity problem.
    Firstly, the distortion term can be reformulated as
    $d_\mathrm{KL}(p_{S|X},p_{S|\hat S})=H(S|\hat S) - H_b(q_{SX}).$
	Denote by $q_{SU},q_{S\hat S}$ the crossover probability between $S$ and $U$, and that between $S$ and $\hat S$ respectively.
	We then have
    \begin{align}
    q_{SU}=&q_{SX}+q_{XU}-2q_{SX}q_{XU}\\
        q_{S\hat S}
		=&q_{SX}+q_{XU}+q_{U\hat S}+4q_{SX}q_{XU}q_{U\hat S}\nonumber\\
        &-2(q_{SX}q_{XU}+q_{SX}q_{U\hat S}+q_{XU}q_{U\hat S}),
        \label{eq:q_SShat}
    \end{align}
	and the mutual information of the distortion term is 
	\begin{equation}
        \label{eq:bin_mi_gen}
		I(\hat S;S)=1-H_b(q_{S\hat S}).
	\end{equation}
    Based on (\ref{eq:bin_mi_gen}), the distortion term is written as
    \begin{equation}
        d_\mathrm{KL}(p_{S|X} \Vert p_{S|\hat S})
        =H_b(q_{S\hat S})-H_b(q_{SX}).
    \end{equation}
    %
    We then parameterize the following conditional probabilities:
    \begin{align}
        &p_0:=p(U=1|X=0),\ p_1:=p(U=0|X=1),\\
        &q_0:=p(\hat S=1|U=0),\;\ q_1:=p(\hat S=0|U=1),
    \end{align}
    and the mutual information term of complexity is written as
    \begin{align}
        I(X;U)=&\frac12\bigg( H_b(\frac{p_0+1-p_1}{2}) + H_b(\frac{p_1+1-p_0}{2})\nonumber\\
        &-H_b(p_0) -H_b(p_1) \big)
    \end{align}
    Similarly, the rate term is written as
    \begin{align}
        I(U;\hat S)=&\frac12\bigg( H_b(\frac{q_0+1-q_1}{2}) + H_b(\frac{q_1+1-q_0}{2})\nonumber\\ 
        &-H_b(q_0) -H_b(q_1)\big).
    \end{align} 
    For the distortion term in (\ref{eq:q_SShat}),
    we have
    \begin{align}
        q_{S\hat S}=&q_{SX}+q_{XU}+q_{U\hat S}+4q_{SX}q_{XU}q_{U\hat S}\nonumber\\
        &-2(q_{SX}q_{XU}+q_{SX}q_{U\hat S}+q_{XU}q_{U\hat S}),
    \end{align}
    where $q_{XU}=\frac{p_0+p_1}2$, $q_{U\hat S}=\frac{q_0+q_1}2$.
    Then the binary RDC problem can be reformulated as:
    \begin{align}
    \label{eq:minimization}
		&\min_{p_0,p_1,q_0,q_1\leq 1} H_b(q_{S\hat S}),\\
        \text{s.t. }&I(X;U)\leq \cc, I(U;\hat S)\leq R\nonumber
	\end{align}
    We associate the following Lagrangian function to solve the optimization problem in (\ref{eq:minimization}):
    \begin{align}
		\mathcal L=& H_b(q_{S\hat S})+\lambda_1(I(X;U)-\cc)
        +\lambda_2(I(U;\hat S)-R)\nonumber\\
		&\hspace{-28pt}+\lambda_3(p_0-1)+\lambda_4(q_1-1)
		+\lambda_5(q_0-1)+\lambda_5(q_1-1).
	\end{align}
    For $0<p_0,p_1,q_0,q_1 < 1$,
		we have $\lambda_3=\lambda_4=\lambda_5=\lambda_6=0$.
		Then for we have the following equations:
        \begin{equation}
        \label{eq:app1_pd_1}
            \frac{\partial H_b(q_{S\hat S})}{\partial p_0} 
            +\lambda_1 \frac{\partial}{\partial p_0}
            \left( H_b(\frac{p_0+1-p_1}{2}) -\frac12H_b(p_0)\right) = 0,
        \end{equation}
        \begin{equation}
        \label{eq:app1_pd_2}
            \frac{\partial H_b(q_{S\hat S})}{\partial p_1} 
            +\lambda_1 \frac{\partial}{\partial p_1}
            \left( H_b(\frac{p_1+1-p_0}{2}) -\frac12H_b(p_1)\right) = 0,
        \end{equation}
        \begin{equation}
        \label{eq:app1_pd_3}
             \frac{\partial H_b(q_{S\hat S})}{\partial q_0} 
            +\lambda_2 \frac{\partial}{\partial q_0}
            \left( H_b(\frac{q_0+1-q_1}{2}) -\frac12H_b(q_0)\right) = 0,
        \end{equation}
        \begin{equation}
        \label{eq:app1_pd_4}
            \frac{\partial H_b(q_{S\hat S})}{\partial q_1} 
            +\lambda_2 \frac{\partial}{\partial q_1}
            \left( H_b(\frac{q_1+1-q_0}{2}) -\frac12H_b(q_1)\right) = 0,
        \end{equation}
        where (\ref{eq:app1_pd_1}), (\ref{eq:app1_pd_2}), (\ref{eq:app1_pd_3}), (\ref{eq:app1_pd_4}) 
        are the partial derivations 
        $\frac{\partial \mathcal L}{\partial p_0}$, $\frac{\partial \mathcal L}{\partial p_1}$, $\frac{\partial \mathcal L}{\partial q_0}$, $\frac{\partial \mathcal L}{\partial q_1}$ respectively,
        leading to the solutions:
    \begin{equation}
        p_0=p_1=q_{XU},\; q_0=q_1=q_{U\hat S},
    \end{equation} 
    where $q_{XU}$ and $q_{U\hat S}$ are determined by
    \begin{equation}
        \cc=1-H_b(q_{XU}),\;
        \cp=1-H_b(q_{U\hat S}).
    \end{equation}
    Hence, the optimal coding scheme for the distortion-rate-complexity problem uses doubly symmetric binary channels (DSBC): let $p_{U|X} \sim \text{DSBC}(q_{XU})$ and $p_{\hat S|U} \sim \text{DSBC}(q_{U\hat S})$, leading to
    \begin{equation}
        d_\mathrm{KL}(p_{S|X} \Vert p_{S|\hat S})
        =H_b(q_{S\hat S})-H_b(q_{SX}).
        \label{eq:bin_distortion}
    \end{equation}
    Since the distortion-rate-complexity problem is equivalent to the RDC problem, the proposed coding scheme is optimal and achieves the RDC function. Thus, (\ref{eq:bin_distortion}) is precisely the RDC function.
    This concludes the proof of Theorem \ref{th:bin}.
    \hfill $\blacksquare$

\bibliographystyle{IEEEtran}
\bibliography{cjx}

\begin{IEEEbiography}[{\includegraphics[width=1in,height=1.25in,clip,keepaspectratio]{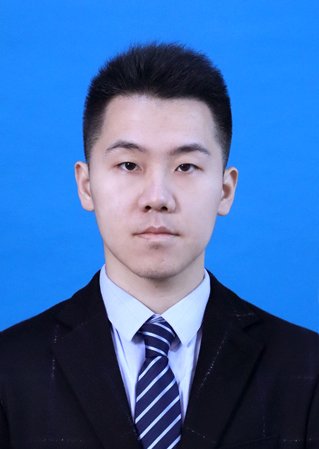}}]{Jingxuan Chai}
Jingxuan Chai received his B.E. degree in communication engineering from Xidian University, Xi'an, China, in 2019. 
He is currently pursuing the Ph.D. degree with the School of Artificial Intelligence, Xidian University.
His research interests include information theory, semantic communications, agentic communications and representation learning.
\end{IEEEbiography}

\begin{IEEEbiography}[{\includegraphics[width=1in,height=1.25in,clip,keepaspectratio]{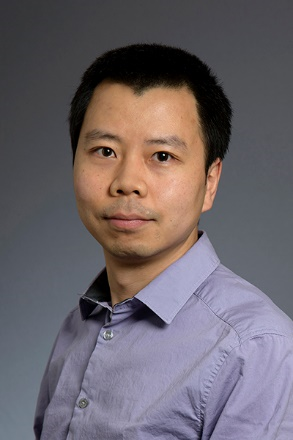}}]{Yong Xiao}
Yong Xiao (Senior Member, IEEE) received his B.S. degree in electrical engineering from China University of Geosciences, Wuhan, China, in 2002, M.Sc. degree in telecommunication from Hong Kong University of Science and Technology in 2006, and his Ph.D degree in electrical and electronic engineering from Nanyang Technological University, Singapore, in 2012. He is now a professor in the School of Electronic Information and Communications at the Huazhong University of Science and Technology (HUST), Wuhan, China. He is also with Peng Cheng Laboratory, Shenzhen, China, and Pazhou Laboratory (Huangpu), Guangzhou, China. He is the associate group leader of the Network Intelligence Group of IMT-2030 (6G promoting group) and the Vice Director of the 5G Verticals Innovation Laboratory at HUST. Before he joined HUST, he was a research assistant professor in the Department of Electrical and Computer Engineering at the University of Arizona, where he was also the center manager of the Broadband Wireless Access and Applications Center (BWAC), an NSF Industry/University Cooperative Research Center (I/UCRC) led by the University of Arizona. His research interests include machine learning, game theory, distributed optimization, and their applications in semantic communications, semantic-aware networking, cloud/fog/mobile edge computing, green communication systems, and the Internet-of-Things (IoT).
\end{IEEEbiography}

\begin{IEEEbiography}[{\includegraphics[width=1.1in,height=1.3in, clip,keepaspectratio]{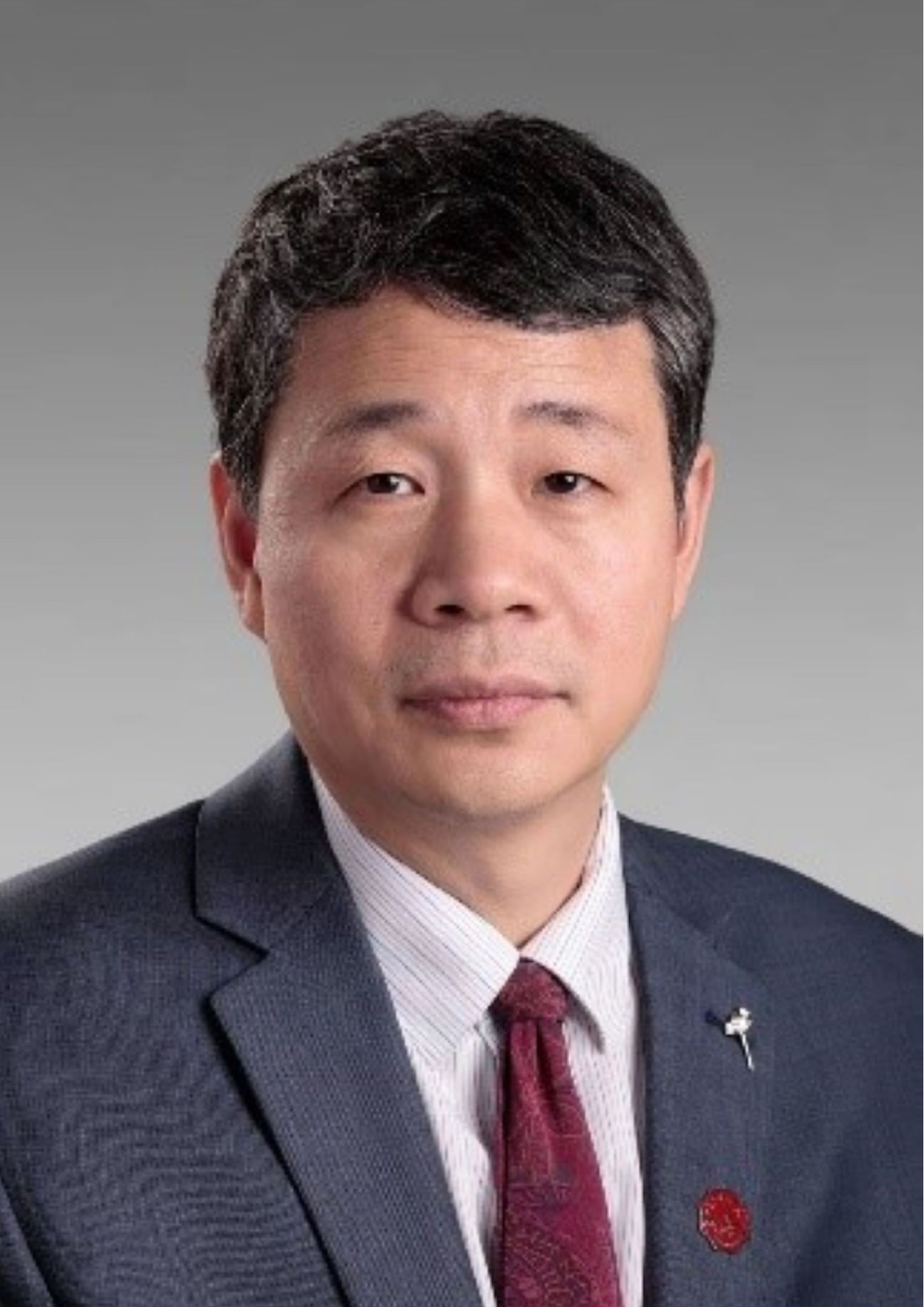}}]{Guangming Shi} (Fellow, IEEE) 
is the Vice Dean of Peng Cheng Laboratory and a Professor with the School of Artificial Intelligence, Xidian University. He is an IEEE Fellow, the chair of IEEE CASS Xi’an Chapter, a senior member of ACM and CCF, Fellow of the Chinese Institute of Electronics, and Fellow of IET. 
His research interests include artificial intelligence, semantic communications, and human-computer interaction.
\end{IEEEbiography}

\end{document}